\newtheorem{theorem}{Theorem}
\newtheorem*{theorem*}{Theorem}
\newtheorem*{definition*}{Definition}
\newtheorem{proposition}{Proposition}
\newtheorem{definition}{Definition}
\newtheorem{algorithm}{Algorithm}
\newtheorem{claim}{Claim}
\newtheorem{lemma}{Lemma}
\newtheorem{fact}{Fact}
\newtheorem{question}{Question}
\def\Z{{\mathbb{Z}}}
\def\R{\mathbb{R}}
\def\C{\mathbb{C}}
\def\mod{\mbox{mod}}
\def\poly{{\rm poly}}
\def\log{{\rm log}}
\def\Span{{\rm Span}}
\def\modn{(\mod \ N)}
\newcommand{\be}{\begin{eqnarray}}
\newcommand{\ee}{\end{eqnarray}}
\newcommand\round[1]{{\lfloor #1 \rceil}}
\newcommand\ket[1]{{ |{#1} \rangle }}
\def\P{{\sf{P}}}
\def\dist{{\rm dist}}
\newcommand{\eps}{\varepsilon}
\renewcommand{\epsilon}{\varepsilon}
\begin{document}

\title{A Discrete Fourier Transform on Lattices \\ with Quantum Applications} 

\author{Lior Eldar\thanks{Center for Theoretical physics, MIT} \and
Peter W. Shor\thanks{Department of Mathematics and Center for Theoretical physics, MIT}
}

\maketitle

\abstract{

In this work, we introduce a definition of the Discrete Fourier Transform (DFT) on Euclidean lattices in $\R^n$,
that generalizes the $n$-th fold DFT of the integer lattice $\Z^n$ to arbitrary lattices.
This definition is not applicable for every lattice, but can be defined on lattices known as
Systematic Normal Form (SysNF) introduced in \cite{ES16}.
Systematic Normal Form lattices are sets of integer vectors that satisfy a single homogeneous
modular equation, which itself satisfies a certain number-theoretic property.
Such lattices form a dense set in the space of $n$-dimensional lattices, 
and can be used to approximate efficiently any lattice.
This implies that for every lattice $L$ a DFT can be computed efficiently on a lattice near $L$.

Our proof of the statement above uses arguments from quantum computing,
and as an application of our definition we show a quantum algorithm
for sampling from discrete distributions on lattices, that extends our ability to sample efficiently from 
the discrete Gaussian distribution \cite{GPV08} to any distribution that is sufficiently "smooth".
We conjecture that studying the eigenvectors of the newly-defined lattice DFT may provide new 
insights into the structure of lattices, especially regarding hard computational problems, like the shortest vector problem.
}

\section{Introduction}

The Fourier Transform is ubiquitous in the study of lattices in mathematics, 
and in recent years has led to breakthroughs in our understanding of the complexity of lattice problems \cite{AR04,Reg09}.
The Fourier Transform on Euclidean lattices is usually associated with the
Fourier series of lattice-periodic functions:
Let $L\subseteq \R^n$ 
denote some full-rank $n$-dimensional lattice, $L = \Span_{\Z}(B)$, where $B\in GL(n,\R)$.
Consider the set of bounded complex-valued continuous functions $f: \R^n \rightarrow \C$
that are periodic in $L$, i.e.
$$
\forall x\in \R^n, z\in L, \ \ f(x) = f(x + z).
$$
Then the Fourier series of $f$, $\hat{f}: L^* \mapsto \C$, supported on the dual lattice $L^*$ is defined as follows:
$$
\forall z\in L^*, \ \ 
\hat{f}(z) := \frac{1}{\det(B)} \cdot \int_{{\cal P}(L)} f(x) e^{-2\pi i \langle x ,z \rangle} dx,
$$
where ${\cal P}(L)$ is the basic parallelotope of the lattice defined by the image
of $[0,1)^n$ under $B$.
Hence, in this respect, the FT on $n$-dimensional lattices is defined as the $n$-dimensional generalization of the Fourier Series of functions
defined on the unit interval.

The Discrete Fourier Transform (DFT)
of a sequence of $N$ complex numbers $X_0,\hdots, X_{N-1}$ is defined as
$$
\forall k\in \Z_N \quad
\hat{x}_k = \sum_{z=0}^{N-1} X_z e^{-2\pi i x \cdot z / N}.
$$
It is a map between discrete sequences that can be thought of as
a discretization of the Fourier Transform to regularly spaced-grids
in the following sense: the Fourier-Transform of a function $f$ that is periodic on the interval $[0,N] \subseteq \R$,
sampled at integer points $[0,\hdots,N-1]$,
corresponds to the DFT of the sequence derived by sampling $f$ at the points $[0,\hdots,N-1]$.
The DFT has proven to be extremely useful in both engineering and computer science.

Given the interpretation of the DFT as a regularly-spaced sampling of the continuous FT
it is then natural to consider whether one can define the DFT on an arbitrary lattice.
Specifically, it would be desirable to have a definition of the DFT which inherits the
inner-product between lattice vectors.
Such is the case for the trivial lattice $\Z^n$:
for any integer $N$ one can consider the ring of integers modulo $N$, $\Z_N$
and define for any function $f: \Z_N^n \to \C$:
$$
\forall x\in \Z_N^n \ \ 
\hat{f}(x) = \sum_{z\in \Z_N^n} f(z) e^{-2\pi i \langle x,z\rangle/N}.
$$
In this case, the DFT at each point corresponds to sampling
the continuous FT of $f$ at the points of $L = \Z^n$.
Furthermore, this definition corresponds to the Fourier Transform of the 
finite group $\Z_N^n$ with entry-wise addition modulo $N$.

We would like to have this behavior for any arbitrary lattice $L\subseteq \R^n$. 
But to relate to finite groups we need to relate to a finite subset of $L$.
Let $N = \det(L)$.
Then $L$ is periodic in $N$ in each direction, 
i.e. for any $v\in L$ we have $v + N e_i \in L$ for all $i\in [n]$.
Therefore,
it is sufficient to consider the finite lattice $L_N$ as an additive subgroup of the finite vector space
$\Z_N^n$ with addition modulo $N$, instead of $L$ as an additive subgroup of $\R^n$ with real addition.
We define lattice DFT as follows:
\begin{mdframed}
\begin{definition}

\textbf{Lattice DFT}

\noindent
Let $L\subseteq \R^n$ be an $n$-dimensional integer lattice, $N = \det(L)$.
A Discrete Fourier Transform of $L$ (DFT) is 
a Fourier Transform of the finite group $L_N$,
for which the characters $\chi_x(z)$ for $x,z\in L_N$ satisfy:
$$
\forall x,z\in L_N \ \ \chi_x(z) = e^{-2\pi i \langle x,z\rangle/N}.
$$
\end{definition}
\end{mdframed}

\noindent
\\
and so the main question is
\begin{question}
Does there exist a lattice DFT for every lattice?
\end{question}

A natural place to look for a DFT is in the context of finite Abelian groups.
Given a lattice $L$ with determinant $N = \det(L)$, one can restrict his attention to the
set of lattice points with entries in $\Z_N$, and consider this as a finite sub-group $L_N$
of the cube $\Z_N^n$ with entry-wise addition modulo $N$.
Since $L_N$ is a finite Abelian group then by the fundamental theorem of classification of finite Abelian groups
$L_N$ is isomorphic to a product of primary cyclic groups.
Hence, one can define the DFT of $L_N$ by considering the DFT of the individual
prime-power factors $\Z_p^k$ for prime $p$ and integer $k$.
Yet, one can check that generically, the resulting DFT would have an inner-product which is very different from the integer inner-product
modulo $N$
between lattice points.

In this work we answer the question above by showing that one can define the DFT for a certain
dense set of lattices.
Furthermore, we show that this DFT
can be computed efficiently, albeit with a quantum computer.
This dense set of lattices corresponds to lattices of a special form called {\it Systematic Normal Form} (or SysNF for short)
introduced by Eldar and Shor in \cite{ES16}:
\begin{definition}\label{def:SysNF}

\textbf{Systematic Normal Form (SysNF) }\cite{ES16}

\noindent
An integer matrix $B$ is said to be SysNF if $B_{i,i}=1$ for all $i>1$,
$B_{i,j}=0$ for all $i>1, i\neq j$, and $B_{1,1} = N$ satisfies
\be\label{eq:cond1}
\sum_{i>1} B_{1,i}^2 + 1\neq 0\modn.
\ee
\end{definition} 
\noindent
Specifying only the non-zero entries of $B$ - it can be written as:
\be
B
= \left[\begin{array}{ccccc} 
N & b_2 & b_3 & \hdots & b_n \\
 & 1& & &\\
 & & 1 & &  \\
 & & & \ddots & \\
 & & & & 1 
\end{array}
\right]
\ee

These lattices form a dense set in the space of lattices in terms of the Euclidean distance, in the sense
that for every $\eps>0$ and arbitrary lattice $L$, 
there exists an efficiently computable linear map $\sigma$, a large integer $T$, and a SySNF lattice $L'$ such that for every $x\in L$
$\sigma(x)\in L'$ and $\|x - \sigma(x)/T\| \leq \eps \|x\|$.
(See Lemma \ref{lem:SysNF} for a precise statement).

By its definition,  a SysNF lattice is the set of integer vectors that satisfy a certain homogeneous modular equation
(modulo a number $N$) where, in addition, this equation satisfies an extra number-theoretic condition.
Defining lattices as the set of solutions of modular equations is a def-facto standard in the study of lattices (see e.g. \cite{P15}), especially
in the context of random lattices due to Ajtai \cite{Ajt96}.
However, the extra number-theoretic condition in Equation \ref{eq:cond1} wasn't defined prior to \cite{ES16} and, in fact, is used crucially
to establish that such lattices have a DFT.
We discuss this further in sub-section \ref{sec:prior}.

Our proof that DFT can be defined on SysNF lattices is quantum.
Concretely, we provide a quantum circuit implementing the character
map for each lattice point. 
The details of this implementation are given in Section \ref{sec:fourier}.
To do this, we first define a quantum analog of the map above:
\begin{mdframed}
\begin{definition*}

\textbf{Quantum Fourier Transform on SysNF lattices}

\noindent
Let $L\subseteq\R^n$ be a SysNF lattice, $N = \det(L)$.
The Quantum Fourier Transform on $L_N$ is defined for basis states as follows:
\be
\forall x\in L_N, {\cal F}_{L,N}(\ket{x}) = \frac{1}{\sqrt{N^{n-1}}}\sum_{z\in L_N} e^{-2\pi i \langle x,z \rangle/N} \ket{z}.
\ee
\end{definition*}
\end{mdframed}

\noindent
\\
The normalization by $\sqrt{N^{n-1}}$ follows from the fact that there are precisely $N^{n-1}$ points in $L_N$
(see Proposition \ref{prop:1}).
We then show that this map is unitary (and in particular, efficiently computable) thereby establishing 
that the $|L_N|$ characters $\chi_x(z) = e^{-2\pi i \langle x,z\rangle/N}$ for $x\in L_N$ are orthogonal, and hence
form a complete set of inequivalent irreducible representations of $L_N$ - i.e.
a Fourier Transform of the group $L_N$.
\begin{theorem}

\textbf{A Quantum Circuit for lattice DFT}

\noindent
Given is a lattice $L = L(B)$, where $B$ is an $n\times n$ SysNF matrix.
There exists a quantum circuit ${\cal Q}$ of size $\poly(n)$, that implements ${\cal F}_{L,N}$.
In particular, $L$ can be assigned a lattice DFT.
\end{theorem}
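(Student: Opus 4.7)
The plan is to reduce the lattice QFT $\mathcal{F}_{L,N}$ to the standard quantum Fourier transform on $\Z_N^{n-1}$ by conjugating the latter with reversible classical circuits that realize an explicit group isomorphism between these two groups. The SysNF structure makes this isomorphism transparent: a lattice point $z\in L_N$ is determined by its last $n-1$ coordinates via $z_1\equiv\sum_{i>1} b_i z_i \pmod{N}$, giving a bijection $\phi:\Z_N^{n-1}\to L_N$ that sends $(z_2,\ldots,z_n)$ to $(\sum_{i>1} b_i z_i \bmod N,\, z_2,\ldots,z_n)$, and in particular $|L_N|=N^{n-1}$, accounting for the normalization in the definition of $\mathcal{F}_{L,N}$.

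My circuit operates in three stages. First, a reversible classical preprocessing: starting from $\ket{x_1,x_2,\ldots,x_n}$ with $x\in L_N$, I update $x_i \leftarrow y_i := x_i + b_i x_1 \pmod{N}$ in place for every $i\geq 2$, and then uncompute $x_1$. The uncomputation is the crucial step: substituting $x_i = y_i - b_i x_1$ into $x_1 \equiv \sum_{i>1} b_i x_i$ yields
$$
\Bigl(1+\sum_{i>1} b_i^2\Bigr) x_1 \;\equiv\; \sum_{i>1} b_i y_i \pmod{N},
$$
so $x_1$ is recoverable from $(y_2,\ldots,y_n)$ after multiplying by $(1+\sum_{i>1} b_i^2)^{-1} \pmod{N}$. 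This is exactly where the SysNF hypothesis is used: it guarantees invertibility of $1+\sum_{i>1} b_i^2$ modulo $N$. After this stage the state is $\ket{0,y_2,\ldots,y_n}$.

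Second, I apply the standard $\Z_N$-QFT to each of the registers $i=2,\ldots,n$. Third, I compute $z_1:=\sum_{i>1} b_i z_i \pmod{N}$ into the first register by a reversible modular-addition circuit. The correctness of the phases is then immediate from
$$
\sum_{i>1} y_i z_i \;=\; \sum_{i>1}(x_i + b_i x_1)\, z_i \;=\; \sum_{i>1} x_i z_i \,+\, x_1 z_1 \;=\; \langle x, z\rangle \pmod{N},
$$
where the middle step uses $z_1\equiv\sum_{i>1} b_i z_i$. The sum over $(z_2,\ldots,z_n)\in\Z_N^{n-1}$ is exactly the sum over $z\in L_N$ under $\phi$, so the output state is precisely $\mathcal{F}_{L,N}\ket{x}$.

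Each of the three stages uses $\poly(n,\log N)$ gates: a constant number of modular arithmetic operations per coordinate for the first and third stages, and a standard QFT on $\Z_N^{n-1}$ for the second. Since $\mathcal{F}_{L,N}$ is a composition of unitaries it is itself unitary, which forces the $N^{n-1}$ prescribed characters $\chi_x(z)=e^{-2\pi i \langle x,z\rangle/N}$ to be pairwise orthogonal and hence to form a lattice DFT in the sense of Definition~1. The main obstacle is precisely the uncomputation of $x_1$: if $1+\sum_{i>1} b_i^2$ is a zero divisor modulo $N$, then $x\mapsto\chi_x$ has nontrivial kernel on $L_N$, and no unitary satisfying the theorem's phase prescription can exist. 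In this sense the SysNF condition is tight for this approach, and verifying that it suffices is both the key number-theoretic input and the content of the proof.
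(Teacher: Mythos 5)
Your proof is correct and follows essentially the same route as the paper's: relabel coordinates via $y_i = x_i + b_i x_1$, uncompute $x_1$ using the invertibility of $1+\sum_{i>1} b_i^2 \pmod N$ guaranteed by the SysNF condition, apply $n-1$ standard $\Z_N$-QFTs, and then re-encode the first coordinate as $z_1 = \sum_{i>1} b_i z_i \bmod N$; the paper phrases the last step as applying the map $\ket{z}\mapsto\ket{Bz'\bmod N}$ and verifies the phase via $\langle x, Bz'\rangle \bmod N$, while you verify the identical phase identity $\sum_{i>1} y_i z_i \equiv \langle x,z\rangle \pmod N$ directly. Your closing observation that the SysNF condition is tight — a zero divisor $1+\sum_{i>1} b_i^2$ would put a nontrivial element in the kernel of $x\mapsto\chi_x$, so no unitary with those phases could exist — is a nice addition that the paper leaves implicit.
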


As an application of our new definition, 
the above circuit gives rise to an efficient way to sample from any discrete distribution on a lattice,
for sufficiently "nice" functions:
\begin{theorem}(sketch of Theorem \ref{thm:sample})
Let $f$ be a complex-valued function on $\R^n$, and $L\subseteq \R^n$ some lattice, generated by matrix $B$.
Suppose that ${\cal F}$, the FT of $f$, can be generated as a superposition on $\Z^n$
$$
\sum_{x\in \Z^n} f(x) \ket{x}
$$
and ${\cal F}$ is approximately bounded in $\lambda_1(L^*)/ 2^{n/2}$
then one can approximately sample from the following discrete distribution efficiently quantumly:
$$
\forall x\in L \quad \P(x) \propto |f(x)|^2.
$$
\end{theorem}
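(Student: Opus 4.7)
The strategy is to invoke the quantum lattice DFT $\mathcal{F}_{L,N}$ established in the previous theorem together with Fourier inversion: if one can prepare a state on the finite dual-lattice cosets with amplitudes given by $\mathcal{F}=\hat{f}$, then applying $\mathcal{F}_{L,N}^{-1}$ produces a state on $L_N$ with amplitudes given by $f$, and a computational-basis measurement then samples $x\in L$ with probability proportional to $|f(x)|^2$.

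To realize this, I would first invoke Lemma \ref{lem:SysNF} to replace $L$ by a nearby SysNF lattice $L'$ so that the DFT circuit of the previous theorem is available; the resulting perturbation of the target distribution is small because the band-limited hypothesis on $\mathcal{F}$ forces $f$ to be smooth on the scale of the SysNF approximation. Next, I would convert the assumed state $\sum_{x\in\Z^n}\mathcal{F}(x)\ket{x}$ into the intermediate state $\sum_{y\in L^{*}_N}\mathcal{F}(y)\ket{y}$ by a folding map. The key observation is that a SysNF lattice with $\det(L)=N$ satisfies $N\Z^n\subseteq L$, hence $NL^*\subseteq\Z^n$ is an integer sublattice of index $N^{n-1}=|L_N|$, so the finite dual group $L^{*}_N$ can be identified with $\Z^n$ modulo $NL^*$. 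Because $\mathcal{F}$ is concentrated in a ball of radius below $\lambda_1(L^*)/2^{n/2}\ll\lambda_1(L^*)/2$, distinct nonzero-amplitude integer points in the support of $\mathcal{F}$ never collide under this reduction, and so the folded state reproduces the target up to exponentially small error. Finally, applying $\mathcal{F}_{L,N}^{-1}$ yields
$$
\frac{1}{\sqrt{N^{n-1}}}\sum_{x\in L_N}\Bigl(\sum_{y\in L^{*}_N}\mathcal{F}(y)\,e^{2\pi i\langle x,y\rangle/N}\Bigr)\ket{x},
$$
and discrete Poisson summation identifies the inner sum, within the same tolerance, with $f(x)$; measurement in the computational basis then produces the desired sample.

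The main obstacle is the folding step. One must show that the $2^{n/2}$ factor in the bound on the support of $\mathcal{F}$ is simultaneously tight enough to (i) guarantee that the support ball of $\mathcal{F}$ lies inside a single Voronoi cell of $L^*$, so that no two distinct cosets of $NL^*$ are collapsed together; (ii) control the discrepancy between the continuous Fourier transform $\mathcal{F}$ and its sampled values at $\Z^n\cap\mathrm{supp}(\mathcal{F})$; and (iii) absorb the perturbation introduced by replacing $L$ with its SysNF approximation. Each of these error sources is at worst exponential in $n$, against which the geometric slack permitted by the radius $\lambda_1(L^*)/2^{n/2}$ must be carefully balanced via a volume-counting argument; this is where the bulk of the technical work is expected to lie.
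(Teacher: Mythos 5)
Your high-level plan --- reduce to a SysNF lattice, fold the Fourier state onto a finite domain, apply the lattice DFT, and measure --- points in the right direction, but there is a genuine gap in the ``folding map'' step, and you have mis-attributed the role of the $2^{n/2}$ factor, which is in fact the crux of the argument.

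The folding you describe --- reducing integer points modulo the dual --- is not unitary as stated: to implement it coherently one must \emph{uncompute} the original representative from its image, and that amounts to solving a bounded-distance decoding problem with respect to $NL'^*$. The paper handles this in two explicit steps. First, the bijection $\Phi_3$ of Claim~\ref{cl:dual} shifts each $\ket{x}$ to $\ket{x+\Phi_3(x)}\in L_N'$ while writing $\Phi_3(x)\in (NL'^*)_N$ into an ancilla. Second, the ancilla is erased by running Babai's nearest-plane algorithm (Proposition~\ref{prop:np1}) on the first register against $N{B'}^{-T}$ and XORing the result into the ancilla. This is exactly where $2^{n/2}$ enters: Babai's algorithm is only a $2^{n/2}$-approximate CVP solver, so the support radius of $\mathcal{F}$ must be below roughly $\lambda_1(NL'^*)/2^{n/2+1}$ in order for the approximate decoder to return the \emph{exact} closest dual point and hence correctly cancel the ancilla. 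Your explanation --- that $2^{n/2}$ ensures distinct cosets of $NL^*$ don't collide --- would only require a factor of $2$, and misses the actual constraint; it also leaves the unitarity of the folding map unaddressed.

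There is also a dimensional mismatch in your intermediate state: $\sum_{y}\mathcal{F}(y)\ket{y}$ ranges over the $N$ elements of $(NL^*)_N\cong \Z_N^n/L_N$, but $\mathcal{F}_{L,N}^{-1}$ is a unitary on $\C^{L_N}$, a space of dimension $N^{n-1}$, so the displayed formula does not type-check. What the paper actually produces before the transform is a state supported on $L_N'$ whose amplitude at $x$ is $\sum_{y\in NL'^*}\mathcal{F}(T(x-y)/N)$, i.e.\ the $NL'^*$-periodization of $\mathcal{F}$ restricted to $L_N'$; applying the (forward) $\mathcal{F}_{L',N}$ then gives $f$ on $L_N'$ by Proposition~\ref{prop:ft1} --- the restriction/periodization duality you gesture at with Poisson summation. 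So the correct intermediate object is a periodized $\mathcal{F}$ living on $L_N'$, not $\mathcal{F}$ sampled on a dual coset space. (Forward vs.\ inverse DFT is immaterial here since only $|f|^2$ is measured.)
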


\subsection{Discussion and Previous Work}\label{sec:prior}

To the best of our knowledge, a Discrete Fourier Transform 
that inherits the Euclidean inner-product and generalizes the DFT of the
integer lattice $\Z^n$ to arbitrary $n$-dimensional lattices has not been defined before.
The standard notion of the Fourier Transform on arbitrary $n$-dimensional lattices relates to the Fourier Series
of lattice-periodic functions, and thus behaves quite differently - and in particular,
is not a map from the lattice onto itself.
Our definition of DFT for lattices cannot be defined for general lattices.
Luckily, however, SysNF lattices form an efficiently computable dense group in the space of lattices, hence
for every lattice, there exists a "nearby" efficiently-computable lattice for which the DFT can be defined.

The Discrete Fourier Transform we define can be viewed as a Fourier Transform
of the discrete group $L_N\subseteq \Z_N^n$ with entry-wise addition modulo $N$,
where the set of irreducible representations used are the $1$-dimensional characters
of the cyclic group of order $N$. 
We note that given any lattice $L$ with $\det(L) = N$ one can define a Fourier Transform
on the finite group $L_N$ using the Fundamental Theorem of Finite Abelian Groups,
but in general this does not give rise to the DFT with the inner-product between lattice
points as in our definition.
Hence, our claim is not that perturbing a lattice to SysNF is necessary to define a finite-group FT,
but rather that perturbing it is sufficient to define a DFT - a FT that inherits the inner-product over integer vectors modulo $N$.
As an added bonus, the DFT on SysNF lattices can be computed on a quantum computer
in time which is polynomial in the dimension of the lattice.

Perturbing lattices to nearby lattices with special structure is not new and has been investigated
by Paz and Schnorr in \cite{Paz87}.
In that reduction, one perturbs a given lattice $L$ to a nearby lattice $L'$ in which
the quotient $\Z^n / L$ is cyclic.
The authors then characterize a lattice $L$ as the set of vectors satisfying a homogeneous modular
equation if and only if the quotient $\Z^n / L$ is cyclic.
Hence the Paz-Schnorr reduction reduces any lattice to the set of solutions of a homogeneous equation modulo
some large integer $N$.
However, the structure of the reduction generates lattices in which $N$ does not generally satisfy
our extra co-primality condition.
Hence the lattices produced by the Paz-Schnorr reduction cannot be assigned a lattice DFT as in our case.

%Even without the reduction of Paz-Schnorr lattices that are the solution sets of modular equations
%are by now de-facto standard in the research of lattices \cite{}, and are considered to be "hard" instances
%of lattice problems, and considering the modulus to be, say a prime number, doesn't change their hardness.
%However, as we shall see, the extra number-theoretic condition $\sum_i B_{1,i}^2 \neq (-1)\modn$ is crucial
%for the DFT to be implemented, and even defined.
%This condition was introduced by Eldar and Shor \cite{ES16} in the definition of SysNF lattices, and
%to the best of our knowledge was not explored before.

In terms of the quantum implementation of the Fourier Transform, we note that effectively,
it is a reduction from the definition of the DFT on $L_N$ to the standard DFT on $\Z^{n-1}$.
That said, it is only because of the extra number-theoretic condition, namely that $\sum_{i>1} B_{1,i}^2 \neq (-1) \modn$
that such a reduction is possible.  This is described in detail in Section \ref{sec:fourier}.
The quantum implementation of the DFT
on the ring of integers modulo $N$ is well-known by now \cite{NC}, and has been studied
for other groups as well
\cite{Beals97}.

In terms of the sampling algorithm our result generalizes, in the quantum setting, the result of Gentry et al. \cite{GPV08} to arbitrary distributions
with "nice" FT's.
In that result the authors showed how to sample from the discrete Gaussian distribution
with a variance comparable to the length of the lattice basis $\|B\|$, and here
we provide a quantum routine that can perform this task for essentially any
distribution that can be "sampled quantumly".
We note that one can also distill a quantum sampling routine from the work of Regev \cite{Reg09},
but the SysNF structure makes our scheme advantageous compared to that scheme:
we can sample quantumly from functions which are not known to be accessible
via the work of \cite{Reg09}.
We discuss this further in Section \ref{sec:sample}.

Finally, the question of sampling from general distributions on lattices has been also
investigated by Lyubashevsky and Wichs \cite{Lyu15} 
in the context of cryptographic efficiency.
There, the authors
show how to sample classically from arbitrary distributions on lattices
defined by a system of modular equations, but they also
require the knowledge of a secret trapdoor in addition to the lattice basis, in order to do that.

\subsection{Open Questions}

We believe there are several important open questions that arise from our new definition,
and its quantum implementation, that pertain to the problem of solving hard lattice problems.
One such question is trying to characterize the eigenvectors of the lattice DFT unitary:

\begin{question}
Let $L\subseteq \R^n$ be some SysNF lattice, and ${\cal F}_{L,N}$ denote its corresponding
DFT.  Find the eigenvectors of ${\cal F}_{L,N}$.
\end{question}

The interest in the above question stems from the fact that using quantum phase estimation
w.r.t. ${\cal F}_{L,N}$ and, say a randomly chosen quantum state,
it may be possible to find such eigenvectors efficiently.
On the other hand, it is known that the eigenvectors of the standard $n$-dim. DFT
are Gaussian, up to multiplying by a Hermite polynomial.
Hence it is possible that the eigenvectors of ${\cal F}_{L,N}$ are discrete Gaussian superpositions
on $L_N$.
Could it be that one of these eigenvectors is a Gaussian that is computationally "interesting"?
say with variance $s = \poly(n)$?

\section{Preliminaries}

\subsection{Notation}

The $n$-dimensional Euclidean space is denoted by $\R^n$.
The Euclidean norm of a vector $x\in \R^n$ is $\|x\| = \sqrt{\sum_{i=1}^n |x_i|^2}$.
A Euclidean lattice $L$ is written as $L = L(B)$ where $B$ is some basis of $L$.
$N$ is used to denote $\det(L)$, and $\Z_N = \mathbb{Z}/(N\mathbb{Z})$ the ring of integers modulo $N$.
Often,  we will refer to $\Z_N$ as the set of numbers $[0,\hdots,N-1]$. 
$x\modn$ is the unique value $x'$ such that $x' = x + k \cdot N$ for integer $k$, and $x'\in \Z_N$.
We define $\Delta$ as the statistical distance between distributions $(p,\Omega), (q,\Omega)$,  i.e. 
\[\Delta(p,q) = \int_{\Omega} |p(x) - q(x)| dx. \]
Given a set $S$, $U(S)$ is the uniform distribution on $S$.
For any $v\in \R^n$ define: 
$
|v| = \max_i |v_i|.
$ 
For real number $s>0$ and vector $c\in R^n$, ${\cal B}_s(c)$ is the closed Euclidean ball of radius $s$ around $c$.
%For integer $n\geq 1$, the notation $[n]$ stands for the set of indices $\{1,\hdots, n\}$.
Given a set $S\subseteq \R^n$, and a vector $v\in \R^n$, we denote $\dist(v,S): = \min_{x\in S} \|v - x\|$.
%For integer $M$, $[M] = \{1,\hdots,M\}$.
For functions $f,g$ we write $f(x) \propto g(x)$ if there exists a constant $c\neq 0$ independent of $x$ such that $f(x) = c \cdot g(x)$.

\subsection{Density of Co-Prime Numbers}

We use the following fact on the density co-primality of numbers due to Iwaniec:
\begin{fact}\label{fact:prime}\textbf{Log-density of co-prime numbers }\cite{I78}

\noindent
There exists a constant $c>0$ such that for any number $n$ with $r$ distinct prime factors
any consecutive sequence of integers of size at least
$$
c \cdot (r \log(r))^2
$$
contains an integer co-prime with $n$.
\end{fact}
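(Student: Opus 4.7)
The plan is to prove this fact by bounding Jacobsthal's function $g(n)$, the largest gap between consecutive integers coprime to $n$; the claim is exactly $g(n) \leq c(r \log r)^2$ where $r = \omega(n)$. First I would reduce to the squarefree case: since $\gcd(m,n)=1$ depends only on the set $P = \{p_1, \ldots, p_r\}$ of distinct primes dividing $n$, I may replace $n$ by its radical $p_1 p_2 \cdots p_r$ without loss of generality.

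Next I would fix an interval $I = [M, M+L)$ of length $L = c(r \log r)^2$ and show that $I$ must contain an integer coprime to every $p_i$. The standard approach is to pick a threshold $z$ (to be optimized later) and partition the primes into small ones $P_s = P \cap [2,z]$ and large ones $P_\ell = P \cap (z,\infty)$, which are handled by different methods. For the large primes I would apply a direct union bound: each $p \in P_\ell$ cuts out at most $\lfloor L/p \rfloor + 1 \leq L/z + 1$ multiples in $I$, so the total number of elements of $I$ divisible by some prime of $P_\ell$ is at most $L r / z + r$.

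For the small primes the argument is subtler. One wants to estimate
$$
S_s \;:=\; \#\{m \in I : p \nmid m \text{ for all } p \in P_s\} \;=\; L \prod_{p \in P_s}\!\Bigl(1 - \tfrac{1}{p}\Bigr)(1 + o(1)) \;\pm\; E(z),
$$
and by Mertens' theorem the main term is at least $c' L / \log z$. Balancing this against the large-prime contribution $Lr/z$ forces $z \asymp r \log r$, which yields a survivor count of order $L / \log r$, comfortably exceeding $1$ once $L \gtrsim r \log^2 r$ --- provided the sieve error $E(z)$ is under control.

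The main obstacle is precisely that sieve error. A naive Eratosthenes inclusion-exclusion over $P_s$ introduces an error of order $2^{|P_s|}$, which is catastrophic. The core of Iwaniec's argument is to replace this with the Rosser--Iwaniec linear sieve, whose error is only polynomial in $z$ (essentially the level of distribution of the sifting problem for an interval). At the optimized choice $z = \Theta(r \log r)$, this polynomial-in-$z$ error is absorbed into the final bound $L = O(r^2 \log^2 r)$, closing the argument. The remaining ingredients --- Mertens' estimate for $\prod_{p \leq z}(1 - 1/p)$, the trivial large-prime union bound, and the parameter balancing --- are all elementary.
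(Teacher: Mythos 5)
The paper does not prove this statement at all: it is quoted as a black-box \emph{Fact} from Iwaniec [Iwa78], so there is no internal proof to compare against. Judged on its own terms, your outline has the right skeleton --- reduce to the radical, split the prime divisors at a threshold $z$, union-bound the primes $p>z$ by $Lr/z+r$, sieve by the primes $p\le z$ with Mertens giving a main term $\gg L/\log z$, and balance to force $z\asymp r\log r$. All of that is fine and does appear in Iwaniec's argument.

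The gap is in the sentence claiming that the linear sieve's ``polynomial-in-$z$ error is absorbed into the final bound $L=O(r^2\log^2 r)$.'' It is not, and this is precisely where the theorem is hard. The Rosser--Iwaniec lower-bound sieve is nontrivial only when the level of distribution $D$ exceeds $z^2$ (the lower-bound function $f(s)$ vanishes at $s=2$), and for interval sifting the only a priori control on the remainder is $\sum_{d\le D}|r_d|\le D$. So the error term is at least of order $z^2\asymp(r\log r)^2\asymp L$, while the main term is only $\asymp L/\log z\asymp r^2\log r$: the remainder \emph{dominates} the main term at the target length, and the inequality cannot close. Optimizing $D=z^{2+\delta}$ against the degradation $f(2+\delta)\asymp\delta$ of the main term still loses several powers of $\log r$, yielding something like $g(n)\ll r^2(\log r)^{4}$ rather than $(r\log r)^2$. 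Removing those logarithms is the actual content of Iwaniec's paper and requires an idea beyond a single application of the linear sieve --- exploiting cancellation in the remainder sum via the (well-factorable) structure of the Rosser weights and an averaging/iteration over translates of the interval, rather than the worst-case bound $|r_d|\le 1$. As written, your argument proves a weaker (though still polynomial in $r$) bound, not the stated one.
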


\subsection{Background on Lattices}
We start by stating some standard facts about lattices.

\begin{definition}

\textbf{Euclidean Lattice}

\noindent
A Euclidean lattice $L\subseteq \R^n$ is the set of all integer linear combinations of a set of linearly independent vectors
$b_1,\hdots, b_m$:
$$
L = \left\{ \sum_{i=1}^m z_i b_i, \ \ z_i\in \Z, \right\} \subseteq \R^n
$$
This set $\{b_i\}_{i=1}^n$ is called the {\em basis} of the lattice.
We denote by $L = L(B)$, where $B$ is the matrix whose columns are $b_1,\hdots, b_m$.
In this paper, we will always assume that $L$ is full-dimensional, i.e. $m=n$.
\end{definition}
\noindent
For lattice $L = L(B)$, ${\cal P}(B)$ is the basic parallelotope of $L$ according to ${\cal B}$:
$$
{\cal P}(B) := \left\{ v= \sum_{i\in [n]} x_i b_i, \ \ x_i\in [0,1) \right\}.
$$
While the basic parallelotope of $L$ depends on the given representation of $L$ via the basis,
the Voronoi cell is a basis-independent object:
$$
{\rm Vor}(L) := \left\{ x\in \R^n, \ \ \forall y\in L, y\neq 0, \ \ \|x - y\| \geq \|x\| \right\}.
$$

\begin{definition}

\textbf{The Dual Lattice}

\noindent
The dual of a lattice $L = L(B)$ is the lattice generated by
the columns of $B^{-T}$. 
\end{definition}
%It is easy to see that the matrix $B^\perp$ is in lower triangular form. In
%particular, the dual of our lattice in Hermite normal form is the column
%space of 
%a matrix $B^\perp$ that looks
%like 
%\be
%B^\perp = \left[\begin{array}{ccccc} 
%b_{1,1}^{-1} & & & & \\
%* & b_{2,2}^{-1} & & & \\
%* & * & b_{3,3}^{-1} & & \\
%\vdots & & & \ddots & \\
%* & * &*  &\ldots  & b_{n,n}^{-1} 
%\end{array}
%\right]
%\ee
%where $*$ represents a possibly non-zero entry and the blanks represent
%zeroes. Every entry of $B^\perp$ is a fraction with denominator $D$. 
%If we define $u_i$ and $v_j$ to be the $i$th and $j$th columns of $B$ and
%$B^\perp$, we have $\langle u_i,v_j \ra = \delta_{i,j}$, because 
%$(B^\perp)^T B = (B^T B)^{-1} B^T B = I$. 

\begin{definition}

\textbf{Successive minima of a lattice}

\noindent
Given a lattice $L$ of rank $n$, its successive minima $\lambda_i(L)$ for all $i\in [n]$ are defined as follows:
$$
\lambda_i(L) = 
\inf 
\left\{ 
r | \dim( {\rm span} (L \cap \bar{B}_r(0)) ) \geq i 
\right\}.
$$
\end{definition}

\begin{definition}

\textbf{Unimodular matrix}

\noindent
The group of unimodular matrices $GL_n(\Z)$ is the set of $n\times n$
integer matrices with determinant $1$.
Unimodular matrices preserve a lattice: $L(B) = L(B')$ if and only if $B = B' \cdot A$, for some unimodular matrix $A$.
\end{definition}

\begin{definition}

\textbf{The determinant of a lattice}

\noindent
For a lattice $L = L(B)$ we define $\det(L) = \det(B)$, and denote by $N$.
\end{definition}
The determinant of a lattice is well-defined, since if $L(B') = L(B)$, then by the above $B = B' \cdot A$
for some unimodular matrix $A$, in which case $\det(B) = \det(B') \det(A) = \det(B')$.
The lattice $L$ is periodic modulo $N$. In other words, if we add $N$ to 
any coordinate of a lattice point, we reach another
lattice point. Thus, a cube of side length $N$ gives a subset of the lattice
which generates the whole lattice when acted on by translations by $N$ in
any direction. We let $L_N$ denote the lattice restricted to a cube of side
length $N$.  

In particular, if $L = L(B)$ is an integer lattice, with $\det(L) = N$ then $L_N$ is a finite additive sub-group, or lattice, of $\Z_N^n$:
\begin{proposition}\label{prop:1}
Let $\Z_N^n$ denote the additive group of $n$-dimensional vectors of integers, where in each coordinate summation is carried out modulo $N$.
Then $L_N$ is an additive sub-group of $\Z_N^n$, that contains the $0$ point.
In particular $L_N$ is a lattice of $\Z_N^n$, with $|L_N| = N^{n-1}$.
\end{proposition}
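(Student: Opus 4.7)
The plan is to tie the three claims (subgroup under mod-$N$ addition, containing $0$, cardinality $N^{n-1}$) to a single structural fact: the containment $N\mathbb{Z}^n \subseteq L$. Once that containment is in hand, $L_N$ is naturally identified with the quotient group $L/N\mathbb{Z}^n$, all three claims follow quickly, and the cardinality drops out of a standard index computation.

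First I would establish $N e_i \in L$ for each standard basis vector $e_i$. Since $L\subseteq \mathbb{Z}^n$ is a sublattice with $[\mathbb{Z}^n : L] = \det(L) = N$, the quotient group $\mathbb{Z}^n / L$ is a finite abelian group of order $N$. By Lagrange, every element of this quotient is annihilated by $N$, so $N e_i \equiv 0$ in $\mathbb{Z}^n / L$, i.e.\ $Ne_i \in L$. Hence $N\mathbb{Z}^n \subseteq L$, which is exactly the periodicity statement: translation of any lattice point by $N$ in any coordinate direction stays in $L$.

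Next I would use this periodicity to identify $L_N$ with the image of $L$ under the reduction map $\pi:\mathbb{Z}^n \to \mathbb{Z}_N^n$. The map $\pi|_L$ is a group homomorphism, its image is an additive subgroup of $\mathbb{Z}_N^n$ containing $0$, and its kernel is $L \cap N\mathbb{Z}^n = N\mathbb{Z}^n$ (using $N\mathbb{Z}^n\subseteq L$). Since every coset of $N\mathbb{Z}^n$ in $L$ has a unique representative with coordinates in $[0,N)$, the image $\pi(L)$ is in bijection with $L_N$ as defined (the lattice points inside the cube of side $N$). This takes care of the subgroup and $0\in L_N$ assertions.

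Finally, to count $|L_N|$ I would use multiplicativity of indices along the chain $N\mathbb{Z}^n \subseteq L \subseteq \mathbb{Z}^n$. From $[\mathbb{Z}^n : N\mathbb{Z}^n] = N^n$ and $[\mathbb{Z}^n : L] = N$, one obtains $|L_N| = [L:N\mathbb{Z}^n] = N^n/N = N^{n-1}$. There is no real obstacle here; the only point that requires a moment's care is justifying the periodicity statement (Step 1), since it is asserted but not proved in the surrounding prose. Everything else is routine abelian-group bookkeeping, and the geometric ``volume of the cube over volume of the fundamental domain'' heuristic $N^n/N = N^{n-1}$ serves as a reassuring sanity check on the index computation.
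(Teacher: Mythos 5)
Your proposal is correct and takes essentially the same route as the paper: the cardinality is obtained from the index chain $N\mathbb{Z}^n \subseteq L \subseteq \mathbb{Z}^n$ giving $|L_N| = N^n/N = N^{n-1}$. The paper's proof is terse (it simply writes $|\mathbb{Z}_N^n|/N$), and you usefully fill in the one point it leaves implicit — the periodicity $N\mathbb{Z}^n \subseteq L$, which the paper asserts in the preceding prose but does not justify; your Lagrange argument on $\mathbb{Z}^n/L$ supplies exactly that.
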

\begin{proof}
The determinant of $L$ is $N$ by definition of the systematic normal form.
Hence the size of $L_N$, which is a finite sub-group of $\Z_N^n$ is given by $|\Z_N^n| / N = N^n / N = N^{n-1}$.
\end{proof}

\noindent
A canonical representation of integer lattices is called the Hermite normal form (HNF):
\begin{definition}

\textbf{Hermite Normal Form}

\noindent
An integer matrix $A\in \Z^{n\times n}$ is said to be in Hermite normal form (HNF) if $A$ is 
upper-triangular, and $a_{i,i}> a_{i,j}\geq 0$ for all $j>i$, and all $i\in [n]$.
\end{definition}

\noindent
It is well-known that every integer matrix can be efficiently transformed into HNF:
\begin{fact}\textbf{Unique, efficiently-computable, Hermite normal form }\cite{KB79}

\noindent
For every full-rank integer matrix $A\in Z^{n\times n}$, there exists a unique unimodular
matrix $U\in GL_n(\Z)$, such that $H = U \cdot A$, and $H$ is HNF.
$U$ can be computed efficiently.
\end{fact}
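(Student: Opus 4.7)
The plan is to establish the three claims in sequence: existence by a constructive algorithm, uniqueness by a structural rigidity argument on the row-lattice, and efficiency by a coefficient-size analysis.

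For existence I would process the columns $j=1,2,\ldots,n$ in order, maintaining the invariant that after stage $j$ the submatrix on the first $j$ rows and columns is in HNF and rows $>j$ have zeros in columns $\leq j$. At stage $j$, I apply the extended Euclidean algorithm pairwise to the entries $A_{j,j},A_{j+1,j},\ldots,A_{n,j}$ using $2\times 2$ unimodular row-block operations, driving $\gcd(A_{j,j},\ldots,A_{n,j})$ to position $(j,j)$ and zeros to rows below. Full rank of $A$ guarantees this gcd is nonzero; negate row $j$ if necessary to make it positive. I then enforce the off-diagonal reduction condition by subtracting integer multiples of the appropriate rows to place each strictly triangular entry in the prescribed interval $[0,a_{i,i})$. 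The composition of all these elementary unimodular operations is the desired $U\in GL_n(\Z)$.

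For uniqueness, suppose $H_1=U_1A$ and $H_2=U_2A$ are both HNF. Then $V:=U_2U_1^{-1}$ is unimodular and satisfies $VH_1=H_2$, so the $\Z$-row-lattice $\Lambda\subseteq\Z^n$ spanned by the rows of $H_1$ equals that of $H_2$. I claim $\Lambda$ determines the HNF uniquely. Proceed by induction on the last row upward: the last row of an HNF of $\Lambda$ is the unique vector in $\Lambda$ whose first $n-1$ coordinates vanish and whose $n$-th coordinate is the smallest positive value of that coordinate on $\Lambda\setminus\{0\}$. With the last row pinned, the next row up is uniquely determined by having zeros in the first $n-2$ coordinates, a minimal positive pivot, and later entries forced into the HNF reduction interval. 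Iterating gives $H_1=H_2$, whence $U_1=U_2$ because $A$ has full rank.

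For efficiency, the naive execution of the existence algorithm suffers from coefficient blow-up: iterated Euclidean reductions and off-diagonal eliminations can inflate intermediate entries by factors that depend on arbitrary subminors of $A$. The Kannan–Bachem remedy is to interleave the row reductions with modular reductions against known divisors of $\det(A)$: once the leading $j\times j$ principal block of the partially-computed HNF is stable, every entry of the still-to-be-processed region may be reduced modulo a suitable multiple of $\det(A)$ without changing the eventual output, since the final HNF entries are themselves bounded by $|\det(A)|$. A careful invariant then shows that each intermediate integer has bit-length polynomial in the input bit-length, giving an overall polynomial running time. I expect this efficiency step to be the main obstacle, since existence and uniqueness are essentially routine, whereas bounding intermediate coefficients requires a delicate choice of reduction order together with a non-trivial minor-based invariant to certify that no quantity ever exceeds the target bound.
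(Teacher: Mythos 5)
The paper offers no proof of this Fact---it is imported from Kannan--Bachem \cite{KB79} as a black box---so there is nothing internal to compare against; I can only assess your sketch on its merits. Its three-part structure (constructive existence by pivot-by-pivot Euclidean elimination, uniqueness by rigidity of the row lattice, efficiency by controlling intermediate coefficient growth) is the standard route and is sound in outline. You are also right that the only non-routine content is the size bound on intermediate entries; note, though, that your ``reduce everything modulo a multiple of $\det(A)$'' step silently uses the fact that $\det(A)\cdot\Z^n$ is contained in the row lattice of a full-rank integer matrix (since $\det(A)A^{-1}$ is integral), and that one must keep the generators $\det(A)e_1,\dots,\det(A)e_n$ available so that the modular reductions do not shrink the lattice. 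That modular variant is really a later refinement (Domich--Kannan--Trotter, Iliopoulos); the original \cite{KB79} argument instead keeps each leading principal block in HNF and bounds all intermediate entries by minors of $A$. Either way the conclusion is correct.

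One concrete repair is needed in both your existence and uniqueness steps: the normalization interval for the off-diagonal entries. For a left-unimodular (row-operation) HNF of an upper-triangular matrix, the entry $h_{i,j}$ with $j>i$ is reduced by subtracting multiples of row $j$, whose leading nonzero entry is $h_{j,j}$; hence the achievable and uniqueness-forcing condition is $0\le h_{i,j}<h_{j,j}$, not $0\le h_{i,j}<h_{i,i}$ as you wrote (you are following the paper's Definition, which appears to have the indices transposed). The latter condition is not attainable in general: the lattice generated by the rows $(1,5)$ and $(0,7)$ contains the vectors $(1,5+7b)$ and no vector $(1,t)$ with $0\le t<1$, so no HNF basis of it can satisfy $h_{1,2}<h_{1,1}=1$. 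Your uniqueness induction needs the same correction: once the last row is pinned as the minimal positive multiple of $e_n$ in the lattice, the row above is $r_{n-1}+c\,r_n$ and the unique admissible $c$ is the one placing the last entry in $[0,h_{n,n})$; the interval $[0,h_{n-1,n-1})$ would in general admit no choice of $c$ or several. Finally, making every diagonal entry positive forces $\det H=|\det A|$, so when $\det A<0$ you need $\det U=-1$; the Fact should therefore be read with ``unimodular'' meaning $\det U=\pm1$ rather than the paper's stated $\det U=1$. With these adjustments your argument is correct.
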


%\begin{definition}
%
%\textbf{Lattice covering radius}
%
%\noindent
%Let $L\subseteq \R^n$ be some lattice.
%The covering radius of $L$, $\rho(L)$ is the minimal number such that any $x\in \R^n$
%is at distance at most $\rho(L)$ from $L$.
%\end{definition}

%The work of Banaszczyk \cite{Ban93} established important connections 
%between parameters of a lattice and its dual:
%\begin{fact}\label{fact:ban1}
%For any lattice $L\subseteq \R^n$ we have
%$$
%\forall i\in [n], \ \ 1 
%\leq \lambda_i \lambda_{n-i+1}^* \leq n
%$$
%\end{fact}
%
%\begin{fact}\label{fact:cover1}
%For any $L\subseteq \R^n$ we have $\rho(L) \cdot \lambda_1^* \leq n$.
%\end{fact}

The following proposition, due to Babai \cite{Bab86}, builds on the famous LLL algorithm and shows that one can solve the closest vector
problem up to an error that is at most exponential in the dimension:
\begin{proposition}\label{prop:np1}

\textbf{The Nearest-Plane Algorithm }\cite{Bab86}

\noindent
There exists an efficient algorithm ${\cal M}$ such that 
for any $u\in \R^n$ and lattice $L = L(B)\subseteq \R^n$
the vector $v = {\cal M}(u,B)\in L$ satisfies:
$$
\| u - v \| \leq 2^{n/2} \cdot \dist(u,L).
$$
\end{proposition}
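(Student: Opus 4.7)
The plan is to define $\mathcal{M}$ as Babai's nearest-plane procedure applied to an LLL-reduced basis of $L$. First I would invoke the LLL algorithm on $B$ to obtain, in polynomial time, a reduced basis $\tilde{B} = [\tilde{b}_1, \ldots, \tilde{b}_n]$ of $L$ together with its Gram--Schmidt orthogonalization $\tilde{b}_1^*, \ldots, \tilde{b}_n^*$. I will use (and cite) the standard LLL guarantee that $\|\tilde{b}_i^*\|^2 \leq 2\,\|\tilde{b}_{i+1}^*\|^2$ for every $i<n$, so that $\max_i \|\tilde{b}_i^*\|^2 \leq 2^{n-1}\min_i \|\tilde{b}_i^*\|^2$.

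Next I would describe $\mathcal{M}(u,B)$ recursively on the reduced basis: compute the nearest integer $c_n = \round{\langle u, \tilde{b}_n^*\rangle/\|\tilde{b}_n^*\|^2}$, project by setting $u' = u - c_n \tilde{b}_n$, and recurse on $u'$ against $(\tilde{b}_1, \ldots, \tilde{b}_{n-1})$, returning $v = c_n \tilde{b}_n + \mathcal{M}(u', [\tilde{b}_1, \ldots, \tilde{b}_{n-1}])$. A straightforward induction on $n$ shows that the residual $u - v$, expressed in the Gram--Schmidt basis, takes the form $u - v = \sum_i \alpha_i \tilde{b}_i^*$ with $|\alpha_i|\leq 1/2$ for every $i$. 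In particular, $\|u-v\|^2 \leq \tfrac{1}{4}\sum_i \|\tilde{b}_i^*\|^2$.

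The main obstacle is then to lower-bound $\dist(u,L)$ in terms of the $\|\tilde{b}_i^*\|$. Here I would argue as follows: for any lattice vector $v' \in L$, the difference $v' - v$ is an integer combination $\sum_j n_j \tilde{b}_j$, so in the Gram--Schmidt basis its coefficients $\gamma_i$ satisfy $\gamma_i = n_i + \sum_{j>i} n_j \mu_{j,i}$; in particular, the coefficient in the \emph{largest} index $i^*$ with $n_{i^*}\neq 0$ is the integer $n_{i^*}$, so $|\gamma_{i^*}|\geq 1$. Writing $u - v' = \sum_i(\alpha_i - \gamma_i)\tilde{b}_i^*$ and using $|\alpha_{i^*}|\leq 1/2$, we obtain $\|u-v'\|^2 \geq \tfrac{1}{4}\|\tilde{b}_{i^*}^*\|^2 \geq \tfrac{1}{4}\min_i \|\tilde{b}_i^*\|^2$. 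Thus either $v$ is itself the closest lattice point to $u$ (and we are done), or $\dist(u,L) \geq \tfrac{1}{2}\min_i\|\tilde{b}_i^*\|$.

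In the latter case, combining the three ingredients above yields
\[
\|u-v\|^2 \;\leq\; \tfrac{1}{4}\sum_i \|\tilde{b}_i^*\|^2 \;\leq\; \tfrac{n}{4}\cdot 2^{n-1}\min_i\|\tilde{b}_i^*\|^2 \;\leq\; n\cdot 2^{n-1}\,\dist(u,L)^2,
\]
which gives $\|u-v\|\leq 2^{n/2}\dist(u,L)$ after absorbing the $\sqrt{n}$ factor (or by using LLL with parameter $\delta$ sufficiently close to $1$, yielding a slightly better base of the exponential that still dominates $\sqrt{n}$). Efficiency is immediate: LLL runs in polynomial time, and the recursion performs $n$ inner-product computations, so the overall runtime is polynomial in the bit-length of $(u,B)$.
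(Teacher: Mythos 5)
The paper does not prove this proposition; it is stated as a citation to Babai (1986), so there is no in-text argument to compare against. Your reconstruction is, in substance, Babai's original proof: LLL-reduce, run nearest-plane so that the residual has Gram--Schmidt coefficients of magnitude at most $1/2$, and use the ``largest index with a nonzero integer coefficient'' observation to show $\dist(u,L)\ge \tfrac12\min_i\|\tilde b_i^*\|$ unless $v$ is already the closest point. That argument is correct.

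One quantitative slip: as written, the final chain gives
$\|u-v\| \le \sqrt{n}\cdot 2^{(n-1)/2}\,\dist(u,L) = \sqrt{n/2}\cdot 2^{n/2}\,\dist(u,L)$,
which is strictly \emph{larger} than $2^{n/2}\dist(u,L)$ once $n>2$; the $\sqrt{n/2}$ cannot simply be ``absorbed.'' Your second suggested repair is the right one. Running LLL with Lov\'asz parameter $\delta$ bounded away from $1$ but greater than $3/4$ replaces $2^{n-1}$ by $r^{\,n-1}$ with $r = 1/(\delta - 1/4) < 2$, and for instance $\delta \ge 11/12$ gives $r \le 3/2$, for which $n\,r^{\,n-1}\le 2^{n}$ holds for every $n\ge 1$; LLL is still polynomial-time for any fixed $\delta<1$. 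Alternatively, the weaker bound $2^{n/2}\sqrt{n}$ would suffice for the only place Proposition~\ref{prop:np1} is used in the paper (the decoding step in the proof of Theorem~\ref{thm:sample}), at the cost of a corresponding $\sqrt{n}$ in the boundedness hypothesis there.
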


\section{The Systematic Normal Form (SysNF)}\label{sec:snf}

In this section we explore the definition Systematic Normal Form introduced in \cite{ES16} and 
discuss some of its basic properties
\noindent
The following facts will be useful later on.
First, by simple matrix in version one obtains:
\begin{proposition}\label{prop:ndual}
If $B$ is SysNF form, then $NB^{-T}$, i.e. the matrix spanning the scaled dual of $L(B)$ assumes the following form:
\be\label{eq:ndual}
N \cdot B^{-T} = \left[\begin{array}{ccccc} 
1 & &&& \\
-b_{2} & N& & &\\
-b_{3} & & N & &  \\
\vdots & & & \ddots & \\
-b_{n} & & & & N 
\end{array}
\right]
\ee
\end{proposition}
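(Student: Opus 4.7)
The plan is a direct computation exploiting the upper-triangular structure of $B$. Write $B$ as the block matrix
$$
B = \begin{pmatrix} N & v^T \\ 0 & I_{n-1} \end{pmatrix},
$$
where $v = (b_2, b_3, \ldots, b_n)^T$ and $I_{n-1}$ is the $(n-1)\times(n-1)$ identity. Since $B$ is upper triangular with $N$ in the top-left corner and ones on the rest of the diagonal, $B$ is invertible (the coprimality condition from Definition \ref{def:SysNF} plays no role here; only the shape matters), and the inverse is necessarily upper triangular as well.

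To pin down $B^{-1}$ I would use the block inversion formula, or equivalently solve $B B^{-1} = I$ column by column. The lower-right $(n-1)\times(n-1)$ block of $B^{-1}$ must equal $I_{n-1}$ and its first-column entries below the top must vanish, since those rows of $B$ already agree with $I$. The $(1,1)$ entry of $B^{-1}$ is $1/N$. Finally, for each $j > 1$, the $(1,j)$ entry of $B B^{-1}$ reads
$$
N\cdot (B^{-1})_{1,j} + v_j \cdot (B^{-1})_{j,j} = N\cdot (B^{-1})_{1,j} + b_j = 0,
$$
so $(B^{-1})_{1,j} = -b_j/N$. Putting everything together,
$$
B^{-1} = \begin{pmatrix} 1/N & -v^T/N \\ 0 & I_{n-1} \end{pmatrix}.
$$

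Transposing swaps the off-diagonal block to the first column below the diagonal, and multiplying by $N$ clears the denominators: the $(1,1)$ entry becomes $1$, the $(j,1)$ entries for $j>1$ become $-b_j$, and the lower-right block becomes $N\cdot I_{n-1}$. This is precisely the matrix in equation \eqref{eq:ndual}. There is no real obstacle here — the whole proof is one block-matrix inversion — so the only thing to be careful about is bookkeeping the transpose correctly so that the $-b_j$ entries end up in the first column (of $N B^{-T}$) rather than in the first row.
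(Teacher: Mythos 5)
Your block-inversion computation is correct, and it is the same approach the paper takes — the paper simply asserts the result "by simple matrix inversion" without writing out the details, which you have supplied cleanly. Nothing to fix.
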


\noindent
In the paper,
we will use the following notation:
$$
L_N := L \cap \Z_N^n,
\quad
(NL^*)_N := (N \cdot L^*) \cap \Z_N^n
$$

\begin{proposition}
Let $L$ be a SysNF lattice with $N = \det(L)$.
Then $|(NL^*)_N| = N$, and $|{\cal P}(L) \cap \Z_N^n| = N$.
Both $L, NL^*$ are periodic in $N$ -- i.e. $N e_i\in L, N e_i\in (NL^*)_N$ for every $i\in [n]$.
\end{proposition}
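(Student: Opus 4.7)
The plan is to read off all four claims directly from the two explicit bases already in hand: the SysNF basis $B$ of $L$ and the scaled-dual basis $N B^{-T}$ recorded in Proposition \ref{prop:ndual}. Both matrices have only $2n-1$ nonzero entries arranged in a highly rigid sparsity pattern, so each assertion reduces to a short bookkeeping argument. In particular, the SysNF co-primality condition $\sum_{i>1} b_i^2 + 1 \not\equiv 0 \pmod{N}$ plays no role here; it will enter only in later constructions.

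First I would dispatch the periodicity claims by exhibiting $N e_i$ as an explicit integer combination of the given basis vectors. In $L$, the vector $N e_1$ is literally the first column of $B$, and for $i\geq 2$ one obtains $N e_i$ by taking $N$ times the $i$-th column of $B$ and subtracting $b_i$ times the first column, which kills the stray $N b_i$ entry in the top row. In $NL^*$, the columns $N e_i$ for $i\geq 2$ appear outright in $N B^{-T}$; the remaining vector $N e_1$ is produced by adding $b_i$ times the $i$-th column of $N B^{-T}$ to $N$ times its first column, so that the $-N b_i$ entries cancel. This simultaneously shows $N\Z^n \subseteq L$ and $N\Z^n \subseteq NL^*$.

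The two cardinality statements then follow by combining this periodicity with determinant arithmetic and the shape of $B$. For $(NL^*)_N$: periodicity yields a bijection $(NL^*)_N \leftrightarrow NL^* / N\Z^n$, whose size is $\det(N\Z^n)/\det(NL^*) = N^n / N^{n-1} = N$, since $\det(NL^*) = |\det(N B^{-T})| = N^{n-1}$. For ${\cal P}(L)\cap \Z_N^n$: writing a generic point of ${\cal P}(L)$ as $\sum_j x_j b_j$ with $x_j\in[0,1)$, the SysNF shape forces the $i$-th coordinate (for $i\geq 2$) to equal exactly $x_i$, so integrality requires $x_i=0$ for all $i\geq 2$. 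The first coordinate is then $x_1 N$, integral only when $x_1 \in \{0, 1/N, \ldots, (N-1)/N\}$. Thus ${\cal P}(L) \cap \Z^n = \{k e_1 : 0 \leq k < N\}$, giving $N$ points that automatically lie inside $[0,N)^n$ and hence in $\Z_N^n$.

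I do not anticipate any real obstacle: the proposition is essentially a verification exercise given Proposition \ref{prop:ndual}. The only minor point that deserves care is confirming that once $N\Z^n \subseteq NL^*$, the two natural descriptions of $(NL^*)_N$ — as lattice points of $NL^*$ inside the cube $[0,N)^n$ and as the finite quotient $NL^*/N\Z^n$ — genuinely coincide, which is immediate from periodicity.
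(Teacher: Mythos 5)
Your proof is correct, and it is actually more complete than the paper's: the paper's own argument establishes $|(NL^*)_N|=N$ and the two periodicity claims, but says nothing explicit about $|\mathcal{P}(L)\cap\Z_N^n|=N$, whereas you supply a clean direct count of $\mathcal{P}(L)\cap\Z^n = \{k e_1 : 0\le k<N\}$. Where the two arguments do overlap they take slightly different routes. For $|(NL^*)_N|=N$ the paper reads the size off the cyclic parametrization $y=(a,-b_2 a,\dots,-b_n a)\bmod N$, $a\in\Z_N$, which is implicit in the structure of $NB^{-T}$; you instead identify $(NL^*)_N$ with the quotient $NL^*/N\Z^n$ and compute its order by determinant arithmetic. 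Both are valid, but the paper's observation buys something extra (the group is \emph{cyclic}, not just of order $N$), while yours is the more generic index computation. For $N\Z^n\subseteq L$ the paper invokes only $\det(L)=N$, silently using the standard fact that an index-$N$ subgroup of $\Z^n$ absorbs multiplication by $N$; your explicit column manipulation $N\cdot b_i - b_i\cdot b_1 = Ne_i$ is a more elementary and self-contained way to see the same thing and mirrors exactly the argument the paper does spell out for $NL^*$. Your side remark that the coprimality condition on $\sum_{i>1}b_i^2+1$ is irrelevant here is accurate, and it is worth noting because that condition is the crux of the later Claim on the bijection $\Phi_3$ and of the QFT construction.
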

\begin{proof}
By Proposition \ref{prop:ndual} $(NL^*)_N$ is a cyclic group of order $N$, hence its size is $N$.
By definition of SysNF we have that $\det(L) = N$, hence $L$ is periodic in $N e_i$ for all $i\in [n]$.
In terms of the dual $NL^*$ each of the last $n-1$ columns is a multiple $e_i N$,
and the vector $N e_1$ is achieved by adding suitable multiples of $N e_i$ for $i>1$
to the vector $(N B^{-T}) \cdot (N e_1)$.
\end{proof}

We now state the following important property:
\begin{claim}\label{cl:dual}

\textbf{Efficient bijection between quotient and dual}

\noindent
There exists an efficiently-computable bijection $\Phi_3: \Z_N^n / L_N \mapsto (NL^*)_N$,
such that for every $x\in \Z_N^n$, $x + \Phi_3(x) \in L_N$.
\end{claim}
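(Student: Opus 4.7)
The plan is to construct $\Phi_3$ explicitly from the single modular equation that defines a SysNF lattice, and reduce everything to a linear bijection on $\Z_N$. Using the SysNF form of $B$, the membership condition restricted to the cube reads
$$L_N = \left\{ x \in \Z_N^n \;:\; x_1 \equiv \sum_{i=2}^n b_i\, x_i \pmod N \right\},$$
and by Proposition \ref{prop:ndual} the scaled dual restricted to the cube is the cyclic subgroup
$$(NL^*)_N = \{ k\cdot v \bmod N \,:\, k \in \Z_N \}, \qquad v := (1, -b_2, \ldots, -b_n).$$
I link both sides to $\Z_N$ through a ``defect'' map $d : \Z_N^n \to \Z_N$ given by $d(x) = x_1 - \sum_{i>1} b_i x_i \bmod N$. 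The description of $L_N$ above says exactly that $\ker d = L_N$, so $d$ descends to a group isomorphism $\bar d : \Z_N^n / L_N \to \Z_N$.

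Next I would evaluate $d$ on $(NL^*)_N$. A direct substitution gives
$$d(kv) \;\equiv\; k\!\left(1 + \sum_{i>1} b_i^2\right) \;\equiv\; k\cdot s \pmod N,$$
where $s := 1 + \sum_{i>1} B_{1,i}^2$ is precisely the quantity appearing in the SysNF condition (Equation \ref{eq:cond1}). Invoking the co-primality of $s$ with $N$ built into the SysNF definition, $s$ is invertible modulo $N$, so $k \mapsto k\cdot s$ is a bijection $\Z_N \to \Z_N$ and $d|_{(NL^*)_N}$ is an isomorphism onto $\Z_N$. I then set
$$\Phi_3(x) \;:=\; \bigl(-s^{-1}\cdot d(x)\bigr)\cdot v \bmod N \;\in\; (NL^*)_N.$$
By construction $d\bigl(x + \Phi_3(x)\bigr) = d(x) + s\cdot(-s^{-1} d(x)) \equiv 0 \pmod N$, so $x + \Phi_3(x) \in L_N$, which is the required property.

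Finally I would verify the remaining claims. Well-definedness on cosets is automatic because $d$ vanishes on $L_N$, so replacing $x$ by $x + \ell$ with $\ell \in L_N$ does not change $\Phi_3(x)$. Bijectivity follows by counting: both $\Z_N^n/L_N$ and $(NL^*)_N$ have cardinality $N$ (by Proposition \ref{prop:1} and the preceding proposition), and $\Phi_3$ is the composition of the two isomorphisms $\bar d$ and the inverse of $k \mapsto -s\cdot k$ to $\Z_N$. Efficiency is immediate, since computing $\Phi_3(x)$ requires only a constant number of arithmetic operations modulo $N$ together with one modular inverse via the extended Euclidean algorithm. The only subtle ingredient, and the sole place the SysNF hypothesis is truly used, is the invertibility of $s$ modulo $N$; without the co-primality condition, $d$ would fail to surject from $(NL^*)_N$ onto $\Z_N$ and no bijection of the prescribed form could exist.
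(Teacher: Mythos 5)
Your proof is correct and is essentially the same as the paper's, just phrased more structurally. The paper parameterizes $(NL^*)_N$ by a scalar $a\in\Z_N$ via $y=(a,-b_2a,\dots,-b_na)$, sets up the membership condition $x+y\in L_N$ coordinate by coordinate, and solves it to find the unique $a=-s^{-1}(x_1-\sum_{i>1}b_ix_i)$ with $s=1+\sum_{i>1}b_i^2$, exactly your $k=-s^{-1}d(x)$. Packaging the residue $x_1-\sum_{i>1}b_ix_i$ as a surjective homomorphism $d:\Z_N^n\to\Z_N$ with kernel $L_N$, and then restricting $d$ to the cyclic subgroup $(NL^*)_N$ where it acts as multiplication by the unit $s$, is a cleaner way to say the same thing: it makes well-definedness on cosets and bijectivity immediate rather than leaving them implicit, but the computation, the role of condition~\ref{eq:cond1}, and the resulting formula for $\Phi_3$ are identical.
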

\begin{proof}
Let $x\in \Z_N^n$.
We want to find (the unique) $y = \Phi_3(x)$ for which $x + y\in L_N$.
Each point in $y\in (NL^*)_N$ is characterized uniquely by an element $a\in \Z_N$ as follows:
\be\label{eq:nl}
y = (a, -b_2 a  \modn,\hdots, -b_n a\modn).
\ee
Thus, to find $y$ we solve the following vector equality over $a,z_2,\hdots, z_n\in \Z_N$:
\be
(x_1,\hdots, x_n)^T + (a, -b_2 a \modn,\hdots, -b_n a \modn)^T
=
\left(\sum_{i=2}^n b_i z_i \modn, z_2,\hdots, z_n\right)^T
\ee
Consider the first coordinate. We have:
\be
x_1 + a = \sum_{i=2}^n b_i z_i \modn.
\ee
Substituting in the above $z_i = x_i - a b_i \modn$ for all $i\geq 2$ implies:
\be
x_1 - \sum_{i=2}^n x_i b_i =  - a \cdot \left( \sum_{i=2}^n b_i^2 + 1 \right) \modn.
\ee
Since $\sum_{i=2}^n b_i^2 + 1$ is co-prime to $N$ then it has an inverse modulo $N$.
Thus, the parameter $a$ can be
computed uniquely from the equation above, which implies that $y$ can be determined uniquely and efficiently.
\end{proof}

\subsection{Reduction to SysNF}

In this section we provide an efficient reduction from an arbitrary lattice to a lattice in SysNF form,
that preserves all important properties of the lattice.
Specifically,
it allows the reduction of any computational problem on an arbitrary lattice $L$ to another problem on an SysNF lattice $L_{SysNF}$
such that any solution to the reduced problem allows one to find {\it efficiently} a solution to the original problem on $L$.

\begin{lemma}\label{lem:SysNF}

\textbf{Efficient reduction to SysNF}

\noindent
There exists an efficient algorithm that for any $L = L(B)$ and $\eps>0$
computes a tuple $\langle B',\sigma,T \rangle$, where
$B'$ is SysNF, $T = \poly(\det(B)/\eps)$ is a positive integer
and $\sigma$ is a linear map $\sigma: L \to L(B')$ such that
 for any $v\in L$ we have $\|\sigma(v)/T - v \| \leq \|v\| \eps$.
\end{lemma}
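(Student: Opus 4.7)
The plan is to produce the reduction in three phases: (i) scale and round the given basis to obtain a nearby integer lattice; (ii) apply a Paz--Schnorr-style perturbation to reach a close integer lattice whose Hermite Normal Form matches the shape of a SysNF matrix; (iii) adjust the $(1,1)$ entry $N$ by a small amount to enforce the coprimality condition of Definition~\ref{def:SysNF}. The final map $\sigma$ will be the composition of the three resulting linear reductions, and the scaling factor $T$ will come from phase~(i).

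For phase~(i), pick a positive integer $T$ of size $\poly(\det(B)/\eps)$, and let $C:=\round{TB}$, so that each entry of $C$ differs from the corresponding entry of $TB$ by at most $1/2$. Define $\sigma_1:L\to L(C)$ on generators by $\sigma_1(Bz):=Cz$ for $z\in\Z^n$; for $T$ large enough this satisfies $\|\sigma_1(v)/T-v\|\le (\eps/3)\|v\|$ for every $v\in L$. For phase~(ii), invoke the reduction of Paz and Schnorr \cite{Paz87}: they show that any integer lattice can be perturbed to a nearby integer lattice whose $\Z^n$-quotient is cyclic, which as explained in Section~\ref{sec:prior} is equivalent to being the solution set of a single homogeneous modular equation. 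After a unimodular change of coordinates in $\Z^n$, any such lattice is represented by a basis matrix
\[
B_1=\left[\begin{array}{ccccc}N_1 & b_2 & b_3 & \cdots & b_n \\ & 1 & & & \\ & & 1 & & \\ & & & \ddots & \\ & & & & 1\end{array}\right],
\]
so $B_1$ already satisfies every SysNF requirement except possibly the coprimality condition, and the reduction produces a linear map $\sigma_2:L(C)\to L(B_1)$ of relative distortion $\|\sigma_2(w)-w\|\le (\eps/3)\|w\|$.

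Phase~(iii) takes care of the remaining coprimality condition. Set $M:=\sum_{i>1} b_i^2+1$, which is at least $1$ since the $b_i$ are integers. A positive integer $M$ has $\omega(M)=O(\log M/\log\log M)$ distinct prime factors, and since $b_i\in[0,N_1)$ we have $M\le n\,N_1^2$, so $\omega(M)=O(\log N_1)$. By Fact~\ref{fact:prime}, the window $[N_1,\,N_1+K]$ with $K=c\cdot(\omega(M)\log\omega(M))^2=\poly(\log N_1)$ must contain an integer $N'$ coprime to $M$; in particular $N'\nmid M$, which is the required condition $M\not\equiv 0\pmod{N'}$. Replacing the $(1,1)$ entry of $B_1$ by $N'$ produces the desired SysNF matrix $B'$. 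The induced linear map $\sigma_3:L(B_1)\to L(B')$ modifies only the first basis vector, and by at most $|N'-N_1|\le K$, which is a relative perturbation of order $\poly(\log N_1)/N_1\le \eps/3$ for $T$ (hence $N_1$) taken sufficiently large. Composing, $\sigma:=\sigma_3\circ\sigma_2\circ\sigma_1$ together with $T$ yields the required tuple, and the approximation bound $\|\sigma(v)/T-v\|\le \eps\|v\|$ follows from the triangle inequality.

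The main obstacle lies in phase~(ii): one must carry out the Paz--Schnorr perturbation while carefully tracking the distortion so that the induced linear map is within $\eps/3$; concretely, after a further scaling one perturbs entries of the HNF so that its diagonal entries become pairwise coprime, which by the Chinese Remainder Theorem forces the quotient to be cyclic, and then one applies a unimodular change of basis of $\Z^n$ to realise the SysNF shape. Phase~(iii), enabled by the density of coprime integers in Fact~\ref{fact:prime}, is precisely where the present construction goes beyond \cite{Paz87}: it is this small extra adjustment that makes the resulting lattice admit the lattice DFT of Definition~1 rather than being merely the solution set of a homogeneous modular equation.
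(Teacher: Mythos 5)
Your high-level plan (scale and round, approximate by a cyclic-quotient lattice, then fix the coprimality condition with Fact~\ref{fact:prime}) resembles the paper's, and phase~(iii) matches the paper's use of the Iwaniec density fact quite closely. But phase~(ii) has a genuine gap, and it is exactly the part you flag as ``the main obstacle'' and then leave unresolved.

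The problem is with the step ``one applies a unimodular change of basis of $\Z^n$ to realise the SysNF shape.'' Unimodular basis changes of the \emph{lattice} (right-multiplication of the basis matrix) do not affect the lattice and are harmless, but they also cannot convert an HNF with diagonal, say, $(d_1,d_2,1,\dots,1)$ with $d_1,d_2>1$ coprime into the SysNF shape whose diagonal is $(N,1,\dots,1)$: the HNF diagonal is a lattice invariant. To collapse the diagonal one must act on $\Z^n$ itself (left-multiplication), and a generic unimodular transformation of the ambient $\Z^n$ is \emph{not} a Euclidean isometry, so the distortion bound $\|\sigma_2(w)-w\|\le (\eps/3)\|w\|$ does not survive it. Moreover, having a cyclic quotient, i.e.\ a single modular equation $\sum_i a_i x_i\equiv 0 \pmod N$, is not by itself enough to put the lattice in SysNF form without an ambient coordinate change: one needs some $a_i$ invertible modulo $N$, which can fail (e.g.\ $N=6$, $a=(2,3)$). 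The paper avoids all of this by a direct construction: it inserts $1/T$ along the sub-diagonal of the HNF, rounds, performs column elimination to reach the explicit form of Eq.~(\ref{eq:b3}), and then permutes the last column to the front. By construction the resulting HNF diagonal is $(Tb_{1,n}'',1,\dots,1)$, so the SysNF shape is forced without ever changing coordinates in $\Z^n$, and the distortion is tracked by comparing $M^{-1}B_4/T$ entry-wise to the original HNF $B_1$ and invoking Proposition~\ref{prop:close}.

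A secondary issue: your phase~(iii) distortion estimate ``relative perturbation of order $\poly(\log N_1)/N_1$'' only accounts for the change in length of the single perturbed basis vector $N_1 e_1\mapsto N' e_1$. For a lattice vector $v=\sum_i c_i b_i'$ the perturbation is $|c_1|\cdot|N'-N_1|$, and after rescaling by $T$ one needs a bound of the form $|c_1|\,K \le (\eps/3)\|v\|$; the coefficient $|c_1|$ must be controlled via Cramer's rule (Proposition~\ref{prop:lll1}), and the comparison must be made against the \emph{unscaled} original basis $B_1$ (determinant $\det(B)$, not $T^n\det(B)$), exactly as the paper does in Eq.~(\ref{eq:bound2}). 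Simply quoting $K/N_1$ does not give the stated bound, and in fact there are lattice vectors (e.g.\ with cancellation in the first coordinate) where $K/N_1$ is wildly too optimistic as a relative-error estimate.
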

The lemma above implies that one can reduce standard lattice problems, given for an arbitrary lattice, to the same problem on a lattice in SysNF,
and then translate the output solution efficiently to a solution for the original lattice.

\noindent
Before presenting the proof,
let us bound the coefficients of any short vector in a lattice. 
\begin{proposition}\label{prop:lll1}
Let $B$ be some matrix, and $v\in L$ be some lattice vector.
Then $v$ can be represented in the basis $B$ using a vector coefficients of absolute value at most $ \|v\| \det(B)$.
\end{proposition}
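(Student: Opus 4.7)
Since $v\in L(B)$, there is a unique integer vector $z\in\Z^n$ with $v = Bz$, and the goal is to bound $|z_i|$ for each $i\in [n]$. The standard tool for a coordinate-wise bound is Cramer's rule, which gives
\[
z_i \;=\; \frac{\det(B_i)}{\det(B)},
\]
where $B_i$ is the matrix obtained from $B$ by replacing its $i$-th column with $v$. Thus the task reduces to bounding $|\det(B_i)|$ from above; the denominator can then be handled trivially since $\det(B)\geq 1$ for an integer lattice.

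For the numerator I would apply Hadamard's inequality column-wise to $B_i$:
\[
|\det(B_i)| \;\leq\; \|v\|\cdot \prod_{j\neq i}\|b_j\|,
\]
so that $|z_i| \leq \|v\|\cdot \prod_{j\neq i}\|b_j\|$. The remaining step is to convert the product of column norms into the $\det(B)$ appearing in the stated bound. Hadamard in the other direction gives $\det(B)\leq \prod_j\|b_j\|$, which is the ``wrong'' direction, so one has to exploit the integrality of $B$ (each nonzero column satisfies $\|b_j\|\geq 1$) and absorb the remaining basis-size factors into $\det(B)$; alternatively, one may assume $B$ has been size-reduced, so that the column norms are polynomially controlled by $\det(B)$.

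The main obstacle I anticipate is exactly this final passage from $\prod_{j\neq i}\|b_j\|$ to $\det(B)$: a naive Hadamard bound yields $\|v\|\cdot \|B\|^{n-1}$ rather than $\|v\|\cdot \det(B)$, and the two coincide only under mild regularity of the basis. Everything else — the existence of $z$, the application of Cramer's rule, and the column-wise Hadamard estimate — is routine and requires no extra hypothesis beyond what is stated. Once the last inequality is justified (either via the integrality/size-reduction argument above, or by interpreting $\det(B)$ as a stand-in for a polynomial upper bound on the basis data, which is all Lemma \ref{lem:SysNF} will need), the proposition follows immediately.
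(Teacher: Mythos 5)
You have correctly reverse-engineered the paper's proof: it is exactly Cramer's rule, $z_i = \det(B_i)/\det(B)$, with $B_i$ obtained by replacing a column of $B$ by $v$ (the paper writes ``row'', a slip). The paper then simply asserts the stated bound; it does not carry out any Hadamard estimate, nor the passage from $\prod_{j\neq i}\|b_j\|$ to $\det(B)$, so the gap you flag is present in the paper's one-line proof as well, and your proposed chain Cramer $\to$ Hadamard $\to$ ``absorb into $\det(B)$'' is a faithful fleshing-out of what the paper leaves implicit.

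Your worry, however, cannot be dismissed: the proposition as written is false for a general basis. Take the $2\times 2$ upper-triangular basis with $1$s on the diagonal and a large integer $M$ in the top-right corner, and $v=(1,1)^T$; then $\det(B)=1$ and $z=B^{-1}v=(1-M,1)^T$, so $|z_1|=M-1$ is unbounded while $\|v\|\det(B)=\sqrt{2}$. Some regularity of $B$ is genuinely required, and neither your argument nor the paper's one-liner supplies it. In the paper's only use of this proposition (inside the proof of Lemma~\ref{lem:SysNF}) the relevant matrix $B_1$ is in Hermite Normal Form, so every off-diagonal entry of a row is dominated by the diagonal entry of that row; a back-substitution estimate then yields $|z_i|\le 2^{O(n)}\|v\|$, and since the downstream conclusion only needs $T=2^{\poly(n)}$, replacing the claimed $\|v\|\det(B)$ by $\|v\|\cdot\poly(n,\det(B))$ or $\|v\|\cdot 2^{O(n)}$ costs nothing. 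So your instinct to invoke an HNF/size-reduction hypothesis is precisely the right fix; the proposition should not be read as holding for arbitrary $B$, and the literal $\|v\|\det(B)$ bound is not established by either proof.
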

\begin{proof}
Follows immediately by Cramer's rule: $v = B \cdot z$, then the magnitude of each $z_i$ is at most $\det(B_i) / \det(B)$,
where $B_i$ is the matrix derived from $B$ by replacing the $i$-th row with $v$.
\end{proof}

\noindent
The following is an easy corollary of the above:

\begin{proposition}\label{prop:close}
\label{two-close-lattices}
Let $B_1 = \{v_i\}_{i=1}^n$ be some basis
and another basis $B_2 = \{w_i\}_{i=1}^n$ for lattice $L_2 = L(B_2)$.
Suppose that $\|v_i - w_i\| \leq \alpha$. 
Let $v = \sum_{i=1}^n c_i v_i$ be a point in $L_1$ and $w = \sum_{i=1}^n
c_i w_i$ be the corresponding point in $L_2$. Then $\left\|v - w \right\| \leq n\|v\| \alpha  \det(B_1) $. 
\end{proposition}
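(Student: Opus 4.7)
The plan is to notice that $v$ and $w$ share the \emph{same} coefficient vector $(c_1,\ldots,c_n)$, so the difference $v-w$ collapses into a linear combination of the small vectors $v_i-w_i$. Concretely, I would write
\[
v - w \;=\; \sum_{i=1}^n c_i v_i - \sum_{i=1}^n c_i w_i \;=\; \sum_{i=1}^n c_i (v_i - w_i),
\]
and then apply the triangle inequality together with the hypothesis $\|v_i - w_i\| \leq \alpha$ to get $\|v - w\| \leq \alpha \sum_{i=1}^n |c_i|$. This reduces the whole problem to bounding the $\ell_1$-norm of the coefficient vector of $v$ in the basis $B_1$.

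Next I would invoke Proposition \ref{prop:lll1} directly: since $v \in L_1$ is expressed in the basis $B_1$ via the coefficients $c_i$, Cramer's rule yields $|c_i| \leq \|v\|\,\det(B_1)$ for each $i \in [n]$. Summing over the $n$ coordinates gives $\sum_i |c_i| \leq n\|v\|\det(B_1)$, and combining with the triangle-inequality bound above produces exactly the claimed estimate $\|v - w\| \leq n\|v\|\,\alpha\,\det(B_1)$.

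I do not anticipate a genuine obstacle; the proposition is essentially a two-line corollary of Proposition \ref{prop:lll1} and the triangle inequality. The only thing to be careful about is that the coefficient bound must reference the basis $B_1$ in which $v$ is actually expanded (and not $B_2$), since $v$ need not lie in $L_2$; but this is precisely the setting of Proposition \ref{prop:lll1}, so the argument goes through cleanly.
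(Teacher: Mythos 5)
Your proof is correct and follows exactly the same route as the paper: expand $v-w$ as $\sum_i c_i(v_i-w_i)$, apply the triangle inequality, and then bound each $|c_i|$ by $\|v\|\det(B_1)$ via Proposition \ref{prop:lll1}. The only difference is that you spell out the intermediate $\ell_1$-norm step a bit more explicitly; the substance is identical.
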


\begin{proof}
By the triangle inequality we have:
\begin{eqnarray}
\left\|v-w \right\| &= &  \left\|\sum_{i=1}^n c_i v_i  -  \sum_{i=1}^n c_i w_i  \right\| \\
&\leq & \sum_{i=1}^n \| v_i -w_i \| |c_i| \\
&\leq & n\alpha \|v\| \cdot \det(B_1)
\end{eqnarray}
where the last inequality follows from Proposition \ref{prop:lll1}. 
\end{proof}

\subsection{Proof of Lemma \ref{lem:SysNF}}

\begin{proof}
We first use $T$ as a parameter and determine it later on in the proof.
We start from an upper-triangular matrix $B_1$ in Hermite normal form:
\be
B_1 = \left[\begin{array}{ccccc} 
b_{1,1} & b_{1,2} & b_{1,3} & \ldots & b_{1,n} \\
& b_{2,2} & b_{2,3} & \ldots & b_{2,n} \\
& & b_{3,3} & \ldots & b_{3,n} \\
& & & \ddots & \vdots \\
& & & & b_{n,n} 
\end{array}
\right]
\ee
add $1/{T}$ along the sub-diagonal,
and truncate each non-zero entry to its nearest integer multiple of $1/T$:
\be
\label{add-subdiagonal}
B_2 = \left[\begin{array}{ccccc} 
b_{1,1}' & b_{1,2}' & b_{1,3}' & \ldots & b_{1,n}' \\
\frac{1}{T}& b_{2,2}' & b_{2,3}' & \ldots & b_{2,n}' \\
& \frac{1}{T}& b_{3,3}' & \ldots & b_{3,n}' \\
& & \ddots & \ddots & \vdots \\
& & & \frac{1}{T}& b_{n,n}'
\end{array}
\right],
\ee
where $b_{i,j}' = \round{b_{i,j} T} /T$.
We note that
\be\label{eq:entry}
\forall i,j \ \ |B_2(i,j) - B_1(i,j)| \leq 1/T.
\ee

We now use column operations to make rows $2$, $3$, $\ldots$,
$n$ of the lattice zero except for the sub-diagonal. This involves subtracting integer multiples
of the $i$th column from all later columns. We obtain a lattice of the form.
\be\label{eq:b3}
\label{after-Gaussian}
B_3 = \left[\begin{array}{ccccc} 
b_{1,1}' & b_{1,2}'' & b_{1,3}'' & \ldots & b_{1,n}'' \\
\frac{1}{T}& 0 & 0 & \ldots & 0 \\
& \frac{1}{T}& 0 & \ldots & 0 \\
& & \ddots & \ddots & \vdots \\
& & & \frac{1}{T}& 0 
\end{array}
\right]
\ee

We now compute the matrix that transforms the basis given in equation
(\ref{add-subdiagonal}) to the basis given in equation (\ref{after-Gaussian}).
That is, we want the matrix $M$ such that $B_3 =  B_2 M$. The diagonal and
super-diagonal of the matrix can be easily calculated: 
\begin{eqnarray}
M &=& \left( 
\begin{array}{ccccccc}
1 & T b_{2,2}' & T b_{2,3}' &  \ldots & T b_{2,n-1}'&   T b_{2,n}' \\
  & 1 & T b_{3,3}' & \ldots & T b_{3,n-1}' & T b_{3,n}' \\
  &   & 1 & \ldots & T b_{4,n-1}' & T b_{4,n}' \\
  &      &   & \ddots &   \vdots & \vdots \\
  &   & & & 1 & T b_{n,n}'  \\ 
  &   &  & &   & 1 
\end{array}
\right)^{-1} \\
&=&
\left( 
\begin{array}{ccccccc}
1 & - T b_{2,2}' &   \ldots &   \\
  & 1 & - T b_{3,3}' & \ldots &  \\
  &   & 1 & - T b_{4,4}' & \ldots\\
  &      &   & \ddots &   \ddots  \\
  &   & & & 1 & - T b_{n,n}'  \\ 
  &   &  & &   & 1 
\end{array}
\right)
\end{eqnarray}
By the above, $M$ is a unimodular matrix, with $\det(M)=1$, and hence $\det(B_2) = \det(B_3)$.
Note that both $M$ and $M^{-1}$ are upper triangular matrices with $1$s along the diagonal.
The determinant of $TB_3$ is $T b_{1,n}''$. 

Observe that if we move the $n$th column of $T B_3$ to the first column we get a lattice which a SysNF lattice, 
except possibly from the entry $b_{1,n}''$ which may not satisfy the condition \ref{eq:cond1}.
Using Fact \ref{fact:prime} there exists an integer $0<\delta \leq c \cdot \log^3( T b_{1,n}'')$ such that:
\be\label{eq:delta}
\sum_{j=1}^{n-1}  (T b_{1,j}'')^2+1 \neq 0 (\mod (T b_{1,n}'' + \delta)).
\ee
By enumerating over all numbers from $T$ to $T + \delta$ and
invoking the Euclidean algorithm for each, we can find such a number $\delta$ efficiently.

So now we modify $T B_3$ by adding $\delta$.
This corresponds to adding $\delta/T$ to the entry $B_3(1,n)$. 
What effect does this change have on the basis of the lattice in $B_2$?
Let $\Delta$ be the matrix with
$\Delta(1,n) = \delta/T$ and all other entries $0$. Then our matrix in the SysNF basis
is $B_3 + \Delta$. To see what the effect on $B_2$ is, we merely need to multiply by
$M^{-1}$. That is,
\be
B_2 + \Delta M^{-1} = ( B_3 + \Delta )M^{-1} .
\ee
Using the form we derived above for $M^{-1}$, we see that because there are 1s along the
diagonal of $M$ then 
$\Delta M^{-1} = \Delta$. Thus, we can make $B_2$ have a determinant satisfying the condition above by simply
adding $\delta/T$ to $B_2(1,n) = b_{1,n}'$. This
changes the length of the $n$th basis vector by at most
$\delta/T$.

Let $B_4$ denote then the output SysNF matrix.
\be
B_4 = \left[\begin{array}{ccccc} 
T b_{1,n}'' + \delta & T b_{1,1}' & T b_{1,2}'' & \ldots & T b_{1,n-1}'' \\
0 & 1 & 0 & \ldots & 0 \\
   & 0 & 1 & \ldots & 0 \\
& & \ddots & \ddots & \vdots \\
& & & 0& 1 
\end{array}
\right]
\ee
By Equations \ref{eq:entry} and equation \ref{eq:delta} :
\be
\forall i,j\in [n] |(M^{-1} B_4(i,j))/T - B_1(i,j)| = 2c \log(T b_{1,n}'')/T.
\ee
That is, the basis $M^{-1} B_4/T$ of $L(B_4)/T$ is entry-wise close to $B_1$.
We invoke Proposition \ref{prop:close} w.r.t. these two bases.
Consider some $v\in L(B_4/T)$.
Applying Proposition \ref{prop:close} implies that 
the 
corresponding vector $\hat{v} = B_1 \cdot (TB_4^{-1}M) \cdot v\in L(B_1)$
has
\be\label{eq:bound2}
\left\|\hat{v}- v\right\| \leq  \frac{n \det(B_1) \|v\| 2c \log^3(T b_{1,n}'')}{T}.
\ee
By Equation \ref{eq:bound2} we conclude that there exists
\be
T = \poly(\det(B_1)/\eps)
\ee
such that
\be
\left\|\hat{v}- v\right\| \leq  \|v\| \eps.
\ee
Setting $\sigma: = M^{-1} B_4$ fixes a bijection $L \to L(B_4')$ with the aforementioned property.
By our choice of $\delta$ we have that $T b_{1,n}'' + \delta$ and $\sum_{j=1}^{n-1} {T b_{1,j}''}^2+1$ are co-prime.
Hence, $B_4$ satisfies condition \ref{eq:cond1} and so it is a valid SysNF matrix.

\end{proof}

\section{A Discrete Fourier Transform on SysNF lattices}\label{sec:fourier}

\subsection{Defining the Discrete Fourier Transform}
The Fourier Transform on Euclidean lattices is normally associated with the
Fourier series of lattice-periodic functions:
Let $L\subseteq \R^n$ denote some full-rank $n$-dimensional lattice, 
with a basic parallelotope ${\cal P}(L)$.
Consider the set of bounded complex-valued continuous functions $f: \R^n \to \C$
that are periodic in $L$, i.e.
$$
\forall x\in \R^n, z\in L, \ \ f(x) = f(x + z).
$$
Then the Fourier series of $f$, $\hat{f}: L^* \mapsto \C$, supported on the dual lattice $L^*$ is defined as follows:
$$
\forall z\in L^*, \ \ 
\hat{f}(z) := \frac{1}{\det(L)} \cdot \int_{{\cal P}(L)} f(x) e^{-2\pi i \langle x ,z \rangle} dx.
$$
One can invert the Fourier series by:
$$
\forall x\in \R^n , \ \ f(x) = \sum_{z\in L^*} \hat{f}(z) \cdot e^{2\pi i \langle x,z \rangle}
$$

%In the case of SysNF lattices $L_N$ is an Abelian sub-group
%of $\Z_N^n$ and hence admits a discrete Fourier Transform: namely
%the character table of $1$-dimensional irreducible representations of $L_N$ into characters of the unit group.
%It is however, unclear, a-priori whether such a representation is indeed the natural
%$n$-dimensional generalization of the $1$-dimensional discrete Fourier Transform:

Recall that the $1$-dimensional discrete Fourier transform is defined as follows:
let $N$ be some integer, and let $f: \Z_N \mapsto \C$ denote the set of real functions on $\Z_N$, and
$\Z_N = \{0,\hdots,N-1\}$.
Then
$$
\forall z\in \Z_N, \ \ 
{\cal F}(z) = \sum_{x\in \Z_N}  f(x) e^{-2\pi i  x \cdot z / N}.
$$
with a similar inversion:
$$
\forall x\in \Z_N, \ \ 
f(x) = \frac{1}{N} \sum_{z\in \Z_N}  {\cal F}(z) e^{-2\pi i  x \cdot z  / N}.
$$
Similarly one can define the DFT on $n$-dimensional functions $f: \Z_N^n \to \C$ as follows:
$$
\forall z\in \Z_N^n, \ \ 
{\cal F}(z) = \sum_{x\in \Z_N^n}  f(x) e^{-2\pi i  \langle x,z\rangle / N}.
$$
We would like a generalization of the DFT to arbitrary lattices.
We define:
\begin{mdframed}
\begin{definition}

\textbf{Discrete Fourier Transform on $L_N$}

\noindent
Let $L = L(B)$ be a lattice spanned by SysNF matrix $B$.
The Discrete Fourier Transform on $L_N$, i.e. ${\cal F}_{L,N}$ is defined as follows:
\be
\forall f\in \R^{L_N} \ \ 
\forall x\in L_N \ 
{\cal F}_{L,N}(x) = \sum_{z\in L_N} f(z) e^{-2\pi i \langle x,z \rangle / N}.
\ee
\end{definition}
\end{mdframed}

\noindent
\\
The number-theoretic property, namely that $\sum_{i>1} B_{1,i}^2 \neq (-1) \pmod{N}$ allows
us to show, using a quantum argument, that the character table of ${\cal F}_{L,N}$ is a unitary matrix. 

\paragraph{Discussion:}
This implies that for any $x\neq y$ the functions $\chi_x(z) = e^{-2\pi i \langle x,z\rangle/N}$ and $\chi_y(z) = e^{-2\pi i \langle y,z\rangle/N}$, regarded
as characters of the group $L_N$ are orthogonal.
Since for each $x$ the character function $\chi_x(z)$ is a homomorphism
$L_N \to \C^{|L_N|}$, and the number of these character functions is also precisely $|L_N| = N^{n-1}$,
then the set of character functions form a complete set of inequivalent irreducible representations of $L_N$.
Therefore,  ${\cal F}_{L,N}$ is a Discrete Fourier Transform of $L_N$, as a Fourier Transform 
over the finite group $L_N$ using a set of irreducible representations on the cyclic group modulo $N$.

We note that for any lattice $L$, with $\det(L)=N$ the set $L_N$ is in particular a {\it finite} Abelian group, and hence is isomorphic
by the Fundamental Theorem of Finite Abelian group to a direct product of $\Z_{p_k}^k$
where $p_k$ is prime.
This then gives rise to a natural Fourier Transform as a direct product of the Fourier Transform on $\Z_{p_k}^k$ for each factor $k$.
However, for a typical lattice $L$ where $N = \det(L)$ is not a prime number, 
it is unclear how one would efficiently find the isomorphism between $L_N$ and its factors, and even if so - whether it would
amount to a DFT - i.e. have the characters correspond to the integer inner-product modulo $N$.

%A possibly elucidating example of this is the lattice $L\subseteq \R^n$ spanned by $p_i e_i$, where the $p_i$'s are unique prime numbers.
%Then $N = \det(L) = \prod_{i=1}^n p_i$.
%By the above, we can define a DFT on $L_N$ as the DFT on the product $\Z_{p_1} \times \hdots \times \Z_{p_n}$.
%Given any basis $B$ of $L$, one can find the order of these $p_i$'s by factoring $N$ (using a quantum computer, say), and hence determine its group structure.
%However, given an arbitrary basis it may be very hard to implement the Fourier Transform over $L_N$.
%For example,  suppose that $B$ contains the basis element $b_1 = \sum_i p_i e_i$.
%Attempting to use the columns of $B$ as generators of the direct product group directly would be impossible in this case,
%since the order of $b_1$ is equal to $lcm(p_1,\hdots,p_n) = N$, and in fact, one can easily
%define a basis in which each element is of order $N$ individually.

\subsection{An Efficient Quantum Algorithm}

So first, just like the standard QFT is a quantum implementation of the $1$-dimensional DFT,
we define a quantum DFT map on lattices:
\begin{mdframed}
\begin{definition}

\textbf{Quantum Fourier Transform on SysNF lattices}

\noindent
Let $L\subseteq\R^n$ be a SysNF lattice, $N = \det(L)$.
The Quantum Fourier Transform on $L_N$ is defined for basis states as follows:
\be
\forall x\in L_N, {\cal F}_{L,N}(\ket{x}) = \frac{1}{\sqrt{N^{n-1}}}\sum_{z\in L_N} e^{-2\pi i \langle x,z \rangle/N} \ket{z}.
\ee
\end{definition}
\end{mdframed}

\noindent
\\
Next, we show that the above map can be implemented efficiently using a quantum circuit.
This, in particular, establishes that ${\cal F}_{L,N}$ is orthogonal, and hence qualifies as a DFT of $L_N$:
\begin{theorem}\label{thm:qft}

\noindent
Given is a lattice $L = L(B)$, where $B$ is an $n\times n$ SysNF matrix.
There exists a quantum circuit ${\cal Q}$ of size $\poly(n)$, that implements ${\cal F}_{L,N}$.
In particular, ${\cal F}_{L,N}$ is a unitary matrix, and hence it is the DFT of $L_N$.
\end{theorem}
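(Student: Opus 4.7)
The plan is to reduce the lattice DFT ${\cal F}_{L,N}$ to the standard QFT on $\Z_N^{n-1}$ via an efficient classical-reversible change of variables. The SysNF co-primality condition is precisely what makes this reduction possible.

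First, I would use the special shape of $B$ to identify $L_N$ with $\Z_N^{n-1}$. Since the last $n-1$ columns of $B$ are $e_2,\ldots,e_n$, each $x\in L_N$ is uniquely determined by its last $n-1$ coordinates $x' = (x_2,\ldots,x_n)\in \Z_N^{n-1}$, with the first coordinate recovered as $x_1 = \sum_{i\geq 2} b_i x_i \bmod N$. This bijection $L_N\leftrightarrow \Z_N^{n-1}$ is efficiently computable in both directions, so one may regard ${\cal F}_{L,N}$ as acting on $n-1$ registers each of width $\lceil \log N \rceil$.

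Next, I would expand the phase. For $x,z\in L_N$ written in the $(x',z')$ parametrization,
\[
\langle x,z\rangle \;\equiv\; \Big(\sum_{i\geq 2} b_i x_i\Big)\Big(\sum_{j\geq 2} b_j z_j\Big) + \sum_{i\geq 2} x_i z_i \;\equiv\; (x')^T M z' \pmod{N},
\]
where $M = I + bb^T$ with $b = (b_2,\ldots,b_n)^T$. By the matrix determinant lemma, $\det M = 1 + \sum_{i\geq 2} b_i^2$, which is exactly the quantity that the SysNF condition declares coprime to $N$ (cf.\ Claim \ref{cl:dual}). Hence $M$, and thus $M^T$, is invertible modulo $N$, and $x'\mapsto M^T x' \bmod N$ is a bijection of $\Z_N^{n-1}$.

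The circuit then assembles in two layers. Layer one is the classical-reversible unitary $U_M:\ket{x'}\mapsto \ket{M^T x' \bmod N}$. Since $M^T = I + bb^T$ is a rank-one update to the identity, the map $x'\mapsto x' + b\,(b^T x') \bmod N$ costs only $O(n)$ modular arithmetic operations, and so does its inverse $M^{-T} = I - (1+\|b\|^2)^{-1}\, b b^T \bmod N$; standard uncomputation then yields a $\poly(n)$-size reversible circuit implementing $U_M$. Layer two is the ordinary QFT on $\Z_N^{n-1}$, applied coordinate-wise via $n-1$ copies of the cyclic QFT modulo $N$. Composing the two layers on a basis state $\ket{x'}$ gives
\[
\ket{x'}\;\mapsto\; \ket{M^T x'} \;\mapsto\; \frac{1}{\sqrt{N^{n-1}}}\sum_{z'\in\Z_N^{n-1}} e^{-2\pi i (M^T x')^T z'/N}\ket{z'} \;=\; \frac{1}{\sqrt{N^{n-1}}}\sum_{z'} e^{-2\pi i\,(x')^T M z'/N}\ket{z'},
\]
which equals ${\cal F}_{L,N}(\ket{x})$ by the phase identity above.

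Being a composition of two efficient unitaries, the circuit is itself an efficient unitary, and its unitarity certifies that ${\cal F}_{L,N}$ is a unitary matrix; the characters $\chi_x(z) = e^{-2\pi i\langle x,z\rangle/N}$ are thus pairwise orthogonal and form a complete set of inequivalent irreducible representations of $L_N$, so ${\cal F}_{L,N}$ is a bona-fide DFT of $L_N$. The single place the SysNF condition is actually used, and the main conceptual obstacle, is the invertibility of $M = I + bb^T$ modulo $N$: without the co-primality of $1+\sum_{i\geq 2} b_i^2$ with $N$ the quadratic form $(x')^T M z'$ would be degenerate modulo $N$, the linear map $x'\mapsto M^T x'$ would fail to be a bijection on $\Z_N^{n-1}$, and the ``change of variables + standard QFT'' reduction would collapse. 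Everything else in the argument is routine bookkeeping.
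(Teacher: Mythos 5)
Your proof is correct and takes essentially the same route as the paper: the phase identity $\langle x,z\rangle \equiv (x')^T(I+bb^T)z' \pmod{N}$ is exactly the paper's computation, with $y = Mx' \bmod N$ recovering the paper's auxiliary variable $y_j = x_j + b_j x_1$, and the circuit decomposes as the same change of variables followed by $n-1$ standard cyclic QFTs modulo $N$ and a final re-encoding into $L_N$. Your packaging via $M = I + bb^T$ and the matrix determinant lemma is a cleaner algebraic presentation of the paper's step-by-step register manipulation (which computes $y$ first and then uncomputes $x_1$, rather than uncomputing $x_1$ and then applying $M$), but the underlying reduction and the single place the SysNF co-primality condition enters -- invertibility of $M$ mod $N$, equivalently the ability to recover $x_1$ from $y$ -- are identical.
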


\begin{proof}

We are given a lattice $L$ represented by an SysNF matrix $B$, where $N = \det(B)$.
All arithmetic computations are carried out w.r.t. the ring $\Z_N$.
Write: 
\be
\forall x\in L_N \ 
\ket{x} = \ket{x_1} \otimes \hdots \otimes \ket{x_n}
\ee
The matrix $B$ is parameterized by $b_{1,1} = N, b_{1,2}, \hdots, b_{1,n}$.
For simplicity of notation, put $b_{1,j} = b_j$.
Compute:
\be
\ket{x} 
\to
\ket{x_1} \otimes \ket{x_2 + b_{2} x_1} \otimes \hdots \otimes \ket{x_n + b_{n} x_1}.
\ee
We now claim that one can un-compute $\ket{x_1}$.
Let 
\be
\forall j, 2 \leq j \leq n \ \ \ 
y_j = x_j + b_{j} x_1
\ee
i.e. our register is
\be
\ket{x_1} \otimes \ket{y_2} \hdots \ket{y_n}.
\ee
Let
\be
\phi(x) = \sum_{j=2}^n x_j b_j.
\ee
Then by definition of $B$, and the fact that $x\in L(B)$ we have that
\be
x_1 = \phi(x) + s \cdot N,  s\in \mathbf{Z}.
\ee
Therefore
\begin{align}
\sum_{j=2}^n b_j y_j
&=
\sum_{j=2}^n b_j x_j + x_1 \sum_{j=2}^n b_j^2 \\
&=
x_1 + s \cdot N + x_1 \sum_{j=2}^n b_j^2 \\
&=
x_1\cdot (\sum_{j=2}^n b_j^2 + 1) + s \cdot N.
\end{align}
Suppose that $x_1 \neq 0 \modn$.
Then since by assumption 
\be
\sum_{j=2}^n b_j^2 +1 \neq 0 \modn
\ee
then
\be
x_1 \modn = \left(\sum_{j=2}^n b_j^2 + 1\right)^{-1} \left(\left(\sum_{j=2}^n b_j y_j\right) \modn\right),
\ee
where the existence of the inverse is implied by condition \ref{eq:cond1}.

Therefore, using only $y_2,\hdots,y_n$ we can compute $x_1$ up to $s \cdot N$, for some $s\in \Z$.
Since by definition $x\in L_N$, then $x_1\in \Z_N$, so we can determine $x_1$ exactly.
Hence, we map unitarily:
\be
\ket{x_1} \otimes \ket{y_2} \hdots \ket{y_n}
\to
\ket{y_2} \hdots \ket{y_n}
\ee

Now, we apply $n$ tensor-product copies of the standard $1$-dimensional QFT on $N$ points.
We get:
\begin{align}
\ket{y_2} \otimes \hdots \otimes \ket{y_n}
&\to
\frac{1}{\sqrt{N^{n-1}}}
\left(
\sum_{z_2\in \Z_N} e^{-2\pi i \langle y_2,z_2 \rangle/N} \ket{z_2}
\right) 
\hdots
\left(
\sum_{z_n\in \Z_N} e^{-2\pi i \langle y_n,z_n \rangle/N} \ket{z_n}
\right) \\
&=
\frac{1}{\sqrt{N^{n-1}}}\sum_{z \equiv (z_2,\hdots,z_n) \in \Z_N^{n-1}}
e^{-2\pi i \langle (y_2,\hdots,y_n),z \rangle /N} \ket{z} \\
&=
\frac{1}{\sqrt{N^{n-1}}}\sum_{z_2,\hdots,z_n \in \Z_N^{n-1}}
e^{-2\pi i  x^T B z' / N} \ket{z} 
\end{align}
where $z' = z'(z)\in \Z_N^n$ is some vector for which $z_i' = z_i$ for all $i>1$.
Hence the above is equal to
\be
\frac{1}{\sqrt{N^{n-1}}}\sum_{z_2,\hdots,z_n \in \Z_N^{n-1}}
e^{-2\pi i  \langle x, Bz' \rangle / N} \ket{z}
\ee
Let us now apply the matrix $B$ unitarily:
\be
\forall z\in \Z_N^{n-1}
\ \ 
\ket{z}
\to
\ket{ B z \modn} =
\ket{ (\sum_{i>1} b_i z_i, z_2,\hdots, z_n)}.
\ee
The RHS is a function of $z_i$ for $i>1$, but independent of $z_1$.
Hence
\be
\ket{B z \modn} = \ket{B z' \modn}
\ee
Therefore, we get the state
\begin{align}
\frac{1}{\sqrt{N^{n-1}}}\sum_{z\in \Z_N^{n-1}} e^{-2\pi i \langle x,Bz'\rangle / N} \ket{Bz'}
&=
\frac{1}{\sqrt{N^{n-1}}}\sum_{w\in L_N} e^{-2\pi i \langle x,w\rangle / N} \ket{w}.
\end{align}
\end{proof}
We note that the above statement only claims that for any $x\in L_N$ the state $\ket{x}$ can be mapped
unitarily to the character of $x$ on $L_N$.
As a quantum circuit on $n$ coordinates one would additionally need to specify the action
when $x\notin L_N$. 
One such possibility is simply to apply the identity map for all such $x$.

\subsection{Properties of the QFT}

\noindent
It is straightforward to check that the phase-shift, and linear-shift unitary matrices
are a conjugate pair w.r.t. ${\cal F}_{L,N}$ when the shift is by a lattice vector:
\begin{proposition}\label{prop:phase1}

\textbf{Vector-shift -- phase-shift equivalence}

\noindent
$$
\forall x\in L_N, v\in L_N, \ \ 
{\cal F}_{L,N} \circ U_v \ket{x} = W_v \circ {\cal F}_{L,N} \ket{x}
$$
\end{proposition}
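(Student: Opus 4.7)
The plan is to unpack the definitions of $U_v$ (the translation operator $U_v\ket{x} = \ket{x + v}$, where addition is entry-wise modulo $N$) and $W_v$ (the phase operator $W_v\ket{z} = e^{-2\pi i \langle v,z\rangle/N}\ket{z}$), then apply both sides of the claimed equation to $\ket{x}$ and match coefficients. This is the lattice analogue of the classical ``shift in real space equals modulation in Fourier space'' identity, so the proof should be a direct unfolding of ${\cal F}_{L,N}$.

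First I would verify that the left-hand side is well-defined: since $L_N$ is an additive subgroup of $\Z_N^n$ by Proposition \ref{prop:1}, the sum $x+v$ taken coordinate-wise modulo $N$ lands back in $L_N$, so $U_v\ket{x} = \ket{x+v}$ is a legitimate basis vector of $L_N$ and ${\cal F}_{L,N}$ may be applied to it. Then I would compute
\begin{align*}
{\cal F}_{L,N} \circ U_v \ket{x}
&= {\cal F}_{L,N}\ket{x+v}
= \frac{1}{\sqrt{N^{n-1}}} \sum_{z\in L_N} e^{-2\pi i \langle x+v,z\rangle/N} \ket{z},
\end{align*}
and on the other side,
\begin{align*}
W_v \circ {\cal F}_{L,N}\ket{x}
&= W_v \Bigl( \frac{1}{\sqrt{N^{n-1}}} \sum_{z\in L_N} e^{-2\pi i \langle x,z\rangle/N} \ket{z}\Bigr)
= \frac{1}{\sqrt{N^{n-1}}} \sum_{z\in L_N} e^{-2\pi i (\langle x,z\rangle + \langle v,z\rangle)/N} \ket{z}.
\end{align*}
Matching these two expressions reduces to checking that $\langle x+v,z\rangle \equiv \langle x,z\rangle + \langle v,z\rangle \pmod{N}$.

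The only genuine subtlety is the passage from addition in $L_N$ (which is coordinate-wise modulo $N$) to addition of the integer inner products that appear in the phases. If $(x+v \bmod N)_i = x_i + v_i + k_i N$ for integers $k_i$, then $\langle x+v,z\rangle = \langle x,z\rangle + \langle v,z\rangle + N\sum_i k_i z_i$, and since $z\in L_N \subseteq \Z_N^n$ the extra term is an integer multiple of $N$, contributing a phase of $1$. This handles the wrap-around and completes the proof. I do not expect a serious obstacle here — the entire statement is a bilinearity computation together with the mild observation that the character $z \mapsto e^{-2\pi i \langle \cdot, z\rangle/N}$ is insensitive to shifts by multiples of $N$ in each coordinate of its argument.
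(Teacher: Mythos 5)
Your proposal is correct and gives the argument that the paper leaves implicit (the paper supplies no proof, remarking only that the equivalence is "straightforward to check"). You correctly supplied the natural definitions of $U_v$ and $W_v$ — which the paper never spells out — with the sign on the phase operator chosen so that the conjugation works under the $e^{-2\pi i}$ convention in ${\cal F}_{L,N}$, you used Proposition~\ref{prop:1} to justify that $x+v \in L_N$, and your treatment of the modular wrap-around (the extra $N\sum_i k_i z_i$ term being annihilated by the period of the character) is exactly the point that must be checked; there is no gap.
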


\noindent
The following fact also follows immediately from the definition of the DFT:
\begin{proposition}\label{prop:ft1}
For a function $f: \Z_N^n \to \R$, let $\hat{f}$ denote it's $n$-dimensional $N$-point DFT:
$$
\forall x\in \Z_N^n
\ \ 
\hat{f}(x):= \sum_{z\in \Z_N^n} f(z) e^{-2\pi i \langle z,x\rangle/N}.
$$
for any function $f: \Z_N^n \to \R$
the quantum state
$$
\ket{\tilde f} \propto {\cal F}_{L,N} \cdot \sum_{x\in L_N} f(x) \ket{x},
$$
may be written as:
$$
\ket{\tilde f} \propto \sum_{x\in L_N} \hat{f}(x) \ket{x},
$$
\end{proposition}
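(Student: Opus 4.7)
The plan is to prove this by direct computation, using only the linearity of $\mathcal{F}_{L,N}$ and its defining action on the basis states of $L_N$. First, I pull the QFT inside the sum by linearity and substitute its definition, obtaining
\be
\mathcal{F}_{L,N} \sum_{x \in L_N} f(x)\ket{x} = \frac{1}{\sqrt{N^{n-1}}} \sum_{x \in L_N} f(x) \sum_{z \in L_N} e^{-2\pi i \langle x,z\rangle/N}\ket{z}.
\ee
Then I interchange the order of summation, which gives
\be
\frac{1}{\sqrt{N^{n-1}}} \sum_{z \in L_N} \left( \sum_{x \in L_N} f(x)\, e^{-2\pi i \langle x,z \rangle / N} \right) \ket{z}.
\ee
The final step is to identify the inner parenthesized sum with $\hat{f}(z)$. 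Since the input superposition only samples $f$ at points of $L_N$, it depends only on $f|_{L_N}$; extending $f$ by zero on $\Z_N^n \setminus L_N$ (the natural convention, and the only one consistent with the statement) turns $\sum_{x \in L_N} f(x) e^{-2\pi i \langle x,z\rangle/N}$ into the full $\Z_N^n$-sum $\sum_{x\in \Z_N^n} f(x) e^{-2\pi i \langle x,z \rangle/N} = \hat{f}(z)$. The overall prefactor $1/\sqrt{N^{n-1}}$ is absorbed into the $\propto$ symbol, completing the identification.

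There is essentially no hard step here: the proposition is a reformulation of the QFT's defining rule from the basis-state picture into the amplitudes picture, and the only thing that requires care is the notational point above, namely reconciling a sum indexed by $L_N$ with a DFT whose stated domain is $\Z_N^n$. Once one agrees that only the restriction $f|_{L_N}$ enters the input state (equivalently, that $f$ is zero-extended off $L_N$), the two sums coincide and the argument is complete in three lines. No property of the SysNF structure beyond having $\mathcal{F}_{L,N}$ well-defined is used, since we never manipulate the inner product $\langle x,z\rangle/N$ itself — only the linear action of the already-constructed unitary from Theorem \ref{thm:qft}.
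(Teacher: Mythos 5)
Your proof is correct and matches the paper's (implicit) argument: the paper states the proposition "follows immediately from the definition" and then, inside the proof of Fact~\ref{fact:sinc1}, carries out exactly your computation — it introduces the zero-extension $f_L$ and observes that the amplitudes of ${\cal F}_{L,N}\ket{f}$ are $\hat{f_L}$ restricted to $L_N$. You correctly identify the one genuine subtlety, namely that the proposition's notation silently conflates $\hat{f}$ with $\hat{f_L}$, and you resolve it the same way the paper does; the rest is the standard swap-of-summations argument using only linearity and the defining action of ${\cal F}_{L,N}$ on basis states.
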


\noindent
In particular, when $f$ satisfies a certain "smoothness" condition -- this restriction
is roughly periodic around $NL^*$ as follows:
\begin{fact}\label{fact:sinc1}

\textbf{QFT of functions with smooth FT}

\noindent
Let $L = L(B)$ be a SysNF lattice, $N = \det(B)$.
Let $f: \Z_N^n \mapsto \R$ be some real-valued function.
Let $\hat{f}$ denote the $n$-dimensional $N$-point DFT of $f$.
Suppose that $\hat{f}$ is 
square-integrable, i.e. $\sum_{x\in L} \hat{f}(x)^2 < \infty$ and it is
$\eps$-FT-smooth, i.e. for every $v\in Vor(L)$ we have
$$
\hat{f}_{L+v}^2(L) \equiv
\sum_{x\in L} (\hat{f}(x-v))^2 
\geq
(1 - \eps)
\cdot
\sum_{x\in L} \hat{f}(x)^2 \equiv \hat{f}_L^2(L). 
$$ 
For any function $f$ let $\ket{f}_L \propto \sum_{x\in L} f(x) \ket{x}$,
and $\ket{f}_{L_N}$ denote the restriction of $\ket{f}$ to the points of $L_N$.
Then
$$
{\cal F}_{L,N} \ket{f}_{L_N}
\propto
\left(\sum_{y\in NL^*} \ket{\hat{f}_y}_L\right)_{L_N}
+ \ket{\cal E},
$$
where 
$\hat{f}_y(x) = \hat{f}(x - y)$, 
$\| \ket{\hat{f}_y}_L \| = 1$ and $\|\ket{\cal E}\| \leq \eps$.
\end{fact}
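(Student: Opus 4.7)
I plan to combine Proposition~\ref{prop:ft1} with a convolution / Poisson-summation identity on $\Z_N^n$, then absorb a small normalization discrepancy into $\ket{\cal E}$ using the $\eps$-FT-smoothness hypothesis.

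First, I would apply Proposition~\ref{prop:ft1}: ${\cal F}_{L,N}\ket{f}_{L_N}$ is proportional to $\sum_{z\in L_N} g(z)\ket{z}$, where $g(z) = \sum_{x\in L_N} f(x)\,e^{-2\pi i\langle x,z\rangle/N}$ is the $N$-point DFT of $f|_{L_N}$. Next, I would write the restriction as the pointwise product $f|_{L_N} = f\cdot\mathbb{1}_{L_N}$ on $\Z_N^n$; by the convolution theorem, $g = \tfrac{1}{N^n}\hat f \ast \widehat{\mathbb{1}_{L_N}}$. The key computation is $\widehat{\mathbb{1}_{L_N}}(z) = \sum_{x\in L_N}e^{-2\pi i\langle x,z\rangle/N}$, which by character orthogonality on the subgroup $L_N \subseteq \Z_N^n$ (the content of Theorem~\ref{thm:qft}, itself underwritten by the SysNF hypothesis on $B$) equals $N^{n-1}$ when $z\in (NL^*)_N$ and $0$ otherwise. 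Hence $g(z) = \tfrac{1}{N}\sum_{y\in(NL^*)_N}\hat f(z-y)$. Interpreting $\hat f$ as the non-periodic FT of $f$, whose $NL^*$-periodization recovers the $N$-point DFT by Poisson summation, this extends to $g(z) = \sum_{y\in NL^*}\hat f(z-y)$. Thus, up to a global constant,
$${\cal F}_{L,N}\ket{f}_{L_N} \;\propto\; \Bigl(\sum_{y\in NL^*}\ket{\tilde f_y}_L\Bigr)_{L_N}, \qquad \ket{\tilde f_y}_L := \sum_{x\in L}\hat f(x-y)\ket{x}\ \text{(unnormalized)}.$$

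Second, I would rescale each shift to unit length to match the statement and estimate the accumulated error. For $y\in NL^*$, decomposing $y = q+v$ with $q\in L$ and $v\in {\rm Vor}(L)$, translation by $q$ permutes $L$, so $\|\ket{\tilde f_y}_L\|^2 = \sum_{x\in L}\hat f(x-v)^2$. The $\eps$-FT-smoothness hypothesis---applied both in the forward and reverse direction using the central symmetry of ${\rm Vor}(L)$ and the translation-invariance of the $L$-sum under $L$-shifts---forces the ratio $r_y := \|\tilde f_y\|/\|\hat f\|_{\ell^2(L)}$ to lie within $O(\eps)$ of $1$. The difference between the exact expression $\sum_y \|\tilde f_y\|\,\ket{\hat f_y}_L$ and the target $\sum_y \ket{\hat f_y}_L$ is $\sum_y(r_y-1)\ket{\hat f_y}_L$; under approximate orthogonality of the restricted shifts $(\ket{\hat f_y}_L)_{L_N}$---implied by the effective concentration of $\hat f$ into a single Voronoi cell---its $\ell^2$-norm after restriction to $L_N$ and global renormalization by unitarity of ${\cal F}_{L,N}$ is at most $\eps$, yielding $\|\ket{\cal E}\|\leq\eps$.

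The main obstacle is the renormalization step. The $\eps$-FT-smoothness hypothesis directly delivers only a one-sided lower bound per shift, so extracting the matching upper bound---and hence the two-sided control of the ratios $r_y$---will require a symmetric application of the hypothesis that I justify via central symmetry of ${\rm Vor}(L)$ and $L$-invariance. Aggregating the per-shift drifts into a single global error of $\eps$ (rather than $\eps$ times the number of shifts, which would be useless) further relies on approximate orthogonality of the restricted shifts on $L_N$, which follows from the effective concentration of $\hat f$ implicit in the smoothness and square-integrability hypotheses: distinct $NL^*$-translates route its mass into essentially disjoint $L_N$-neighborhoods. Carefully combining these two ingredients will be the technical core of the proof.
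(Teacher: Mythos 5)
Your proposal follows essentially the same path as the paper's proof: identify the amplitudes of ${\cal F}_{L,N}\ket{f}_{L_N}$ as the $\Z_N^n$-DFT of the $L$-restricted function, show this is the $NL^*$-periodization of $\hat{f}$, rewrite the result as $\sum_{y\in NL^*} c_y \ket{\hat f_y}_L$ with $c_y = \bigl(\sum_{x\in L}\hat f(x-y)^2\bigr)^{1/2}$, and then invoke $\eps$-FT-smoothness to argue the coefficients $c_y$ are nearly equal. Your use of the convolution theorem plus character orthogonality to get the periodization is just a more explicit rendering of the paper's appeal to the fact that the DFT of an $L$-supported function is $NL^*$-periodic, so nothing hinges on that difference (and note that the vanishing of $\sum_{x\in L_N} e^{-2\pi i\langle x,z\rangle/N}$ for $z\notin (NL^*)_N$ is just the standard orthogonality of characters over the finite group $L_N$, not really the content of Theorem~\ref{thm:qft}).

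Where you diverge is in flagging, as an unresolved ``technical core,'' the two issues in the final error estimate: that the smoothness hypothesis only lower-bounds $c_y$ against $c_0$ (giving no a priori upper bound on the ratios $r_y = c_y/c_0$), and that the per-shift discrepancies $(r_y-1)$ need to be aggregated into a single relative error $\eps$ rather than something like $\eps$ times the number of cosets. These are genuine subtleties, and you are right to worry about them; however, the paper's own proof does not address them either---after establishing the lower bound $c_y \geq \sqrt{1-\eps}\,c_0$ (Equations~\ref{eq:y1}) it simply asserts the conclusion. So while your plan is not a complete proof, it captures the paper's argument faithfully and, if anything, is more candid about exactly where the missing rigor lies. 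If you want to close the gap, the cleanest route is likely to strengthen the smoothness hypothesis (e.g., assume $c_y^2 \geq (1-\eps)\max_{y'} c_{y'}^2$, or exploit the constraint $\sum_y c_y^2 = \|\hat f\|^2_{\Z_N^n}$ together with $\eps \ll 1/N$ as in Theorem~\ref{thm:sample}) rather than relying on the central-symmetry / approximate-orthogonality argument you sketch, which by itself does not supply the two-sided control you need.
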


\begin{proof}

Let $g: L_N \mapsto \R$ denote the function that describes the amplitudes of ${\cal F}_{L,N} \ket{f}$:
$$
{\cal F}_{L,N} \ket{f}_{L_N} = \sum_{x\in L_N} g(x) \ket{x}.
$$
For each $x\in L_N$ we have by definition:
$$
{\cal F}_{L,N} \ket{x} =\frac{1}{ \sqrt{N^{n-1}}}\sum_{z\in L_N} e^{-2\pi i \langle x,z \rangle / N} \ket{z}.
$$
Hence $\ket{x}$ is mapped to a super-position over the $\Z_N^n$ character of $x$, {\it restricted} to the lattice $L_N$.
For a function $f: \Z_N^n \mapsto \R$ let $f_L: \Z_N^n \mapsto \R$ denote the function:
\[
  f_L(x) = \left\{\def\arraystretch{1.2}%
  \begin{array}{@{}c@{\quad}l@{}}
    f(x) & \text{if $x\in L$}\\
    0 & \text{o/w}\\
  \end{array}\right.
\]
Then by the above, the amplitudes of ${\cal F}_{L,N} \ket{f}$, namely $g(x)$, are given by computing
the full $\Z_N^n$-DFT of $f_L$, and then restricting to $L$: 
\be\label{eq:g1}
\forall x\in L_N, \ \ 
  g(x) =
    \hat{f_L}(x)  
\ee
Since $f_L$ is supported on $L$ then $\hat{f_L}$ is periodic on $NL^*$ as follows:
\be\label{eq:f1}
\forall x\in \Z_N^n, \ \ 
\hat{f_L}(x) = \sum_{y\in NL^*} \hat{f}(x-y),
\ee
Equations \ref{eq:f1}, \ref{eq:g1} imply together that:
\be
\forall x\in L_N, \ \ 
g(x) \propto \sum_{y\in NL^*} \hat{f}(x-y) \equiv \sum_{y\in NL^*}  \hat{f}_y(x),
\ee
and since $\hat{f}$ is square integrable then $\hat{f}_y$ is square-integrable for each $y$, so we can re-write the above as:
\be\label{eq:sum1}
\ket{g}_{L_N} \propto   \sum_{x\in L_N} \sum_{y\in NL^*}\hat{f}_y(x) \ket{x}
=
\left(\sum_{y\in NL^*}
\sqrt{\sum_{x\in L} \hat{f}_y(x)^2}
\ket{\hat{f}_y}_L\right)_{L_N},
\ee
where $\ket{\hat{f}_y}_{L}$ is a normalized state.
By the smoothness assumption we have that
\be\label{eq:y1}
\forall v\in {\rm Vor}(L), \ \ \sum_{x\in L} \hat{f}(x-v)^2 \geq  (1 - \eps) \sum_{x\in L} \hat{f}(x)^2.
\ee
Since ${\rm Vor}(L) \cong {\cal P}(L) \equiv \Z_N^n / L_N \cong (NL)^*_N$ then
\be\label{eq:y1}
\forall y\in NL^*, \ \ \sum_{x\in L} \hat{f}(x-y)^2 \geq  (1 - \eps) \sum_{x\in L} \hat{f}(x)^2.
\ee
Plugging back into Equation \ref{eq:sum1} implies that the state is close
to a uniform super-position over shifted copies of $\hat{f}$, namely $\hat{f}_y$:
\be
\ket{g}_{L_N} 
\propto 
\left( \sum_{y\in NL^*} \ket{\hat{f}_y}_L \right)_{L_N}+ {\cal E},
\quad
\|{\cal E}\| \leq \eps
\ee
\end{proof}

\section{Sampling Functions with "Nice" FT's}\label{sec:sample}

We now consider the problem of sampling from lattices.
One usually considers a distribution ${\cal D}$ on $\R^n$, and then asks whether
we can sample from the discrete distribution ${\cal D}$ restricted to $L$, 
i.e. where each $x\in L$ is sampled with probability proportional to ${\cal D}(x)$.
Notably, for certain distributions ${\cal D}$, sampling from the discrete distribution of ${\cal D}$ on an arbitrary
lattice $L$ is at least as hard as solving some version of the shortest vector problem on $L$.
For example, if we can sample from ${\cal D}(x) \propto e^{-\pi \|x\|^2 / s^2}$, where $s$ is comparable to $\lambda_1(L)$,
then w.h.p. we sample a lattice vector $s$ of length at most $c \lambda_1(L) \sqrt{n}$, for some constant $c>0$,
thereby solving an approximate version of the shortest-vector problem (SVP), that has no known efficient solution.

An interesting question though, given a lattice basis $B$, is whether
or not we can sample from, say, the Gaussian distribution $e^{-\pi \|x\|^2 / s^2}$, with
$s$ as small as possible given other SVP algorithms.
In \cite{GPV08} the authors have provided an affirmative answer to this question, showing a classical
algorithm that can sample from the discrete Gaussian on any lattice for all $s$ at least  $2^{n/2} \lambda_1(L) \ln(n)$,
i.e. almost matching the the bound provided by the LLL algorithm of $2^{n/2} \lambda_1(L)$.
The sampling algorithm of \cite{GPV08} relies crucially on the fact that the desired distribution
is the $n$-th fold product of the Gaussian measure.

In this work, we would like to extend this result to a more general class of distributions using quantum circuits.
Using the quantum circuit in Lemma \ref{thm:qft} one can derive a method to sample from distributions on lattices,
whose Fourier Transforms are efficiently samplable in the quantum sense.

We require several additional definitions.
The first one defines a quantum analog of what it means for a distribution to be efficiently samplable:
\begin{definition}

\textbf{Quantumly- Efficiently samplable functions}

\noindent
A function $f: \Z_N^n \to \R$ is said to be QES (quantumly efficiently samplable) if
there exists a quantum circuit of size $\poly(n)$ that generates the quantum state
$$
\frac{1}{\sqrt{\sum_{x\in \Z_N^n} |f(x)|^2}} \cdot
\sum_{x\in \Z_N^n} f(x) \ket{x}
$$
\end{definition}

\noindent
The second one relates to functions whose square-measure is bounded in some $n$-dimensional ball
in real space, at least approximately:
\begin{definition}

\textbf{Bounded functions}

\noindent
For $\eps \in (0,1)$ and $s>0$ a function $f: \Z_N^n \to \R$ is $(\eps,s)$-bounded if 
$$
\sum_{x\in \Z^n \cap {\cal B}_s(0)} |f(x)|^2 
\geq (1 - \eps) 
\sum_{x\in \Z^n} |f(x)|^2
$$
\end{definition}

\noindent
Borrowing from computational learning theory, we define a notion of "PAC" (probably approximately correct) sampling.
We consider a distribution  ${\cal D}$ on real space 
and sample lattice points proportionally to ${\cal D}(x)$ for each $x\in L_N$.
We then want to approximate ${\cal D}$: we allow both a statistical error (probably)
and a Euclidean error (approximately):
\begin{definition}

\textbf{PAC sampling}

\noindent
Let $L\subseteq \R^n$ be some lattice, ${\cal D}: \R^n \to [0,1]$ some probability measure on $\R^n$.
${\cal D}'$ is an $(\eps,\delta)$ PAC sampler for ${\cal D}$ w.r.t. $L$ if there exists a nearby probability measure ${\cal D}''$,
$\Delta( {\cal D}', {\cal D}'') \leq \eps$ satisfying:
$$
\forall x\in L \quad
{\cal D}''(x) = {\cal D}(x + \eps(x)), \ \ \|\eps(x)\| \leq \eps.
$$
\end{definition}

We present the following quantum algorithm, that invokes the nearest plane algorithm ${\cal M}$
using the basis $B$, and our specialized SysNF-QFT.
The entire algorithm is thus encapsulated by a SysNF (and inverse SysNF) map:
\begin{mdframed}
\begin{algorithm}
${\rm Sample}(f,B)$
\begin{enumerate}
\item
Generate a SysNF approximation $B'$ of $B$ with parameter $\eps / (\sqrt{n} \det(B))$: $\langle B' , \sigma,T \rangle$, and denote $L' = L(B')$,
and $N = \det(L')$.
\item
Generate
$$
\ket{\psi_1} = \sum_{x\in \Z^n} {\cal F}(T x/N) \ket{x}.
$$
\item
Add ancilla and apply $\Phi_3$ from Claim \ref{cl:dual}:
$$
\ket{x} \otimes \ket{0} \to  \ket{x + \Phi_3(x)} \otimes \ket{\Phi_3(x)}.
$$
Denote by $\ket{\psi_2}$.
\item
Apply the nearest-plane algorithm w.r.t. $({NL'})^*$ on the first register, and XOR the result into the second register:
$$
\ket{x} \otimes \ket{y} \to \ket{x} \ket{y + {\cal M}(x,N{B'}^{-T})}
$$
Denote by $\ket{\psi_3}$.
\item
Apply ${\cal F}_{L',N}$ on the first register:
$$
\ket{\psi_4} = ({\cal F}_{L',N} \otimes I) \ket{\psi_3}.
$$
\item
Measure the first register in $\ket{\psi_4}$: $\to z\in L'$.
Return $\sigma^{-1}(z)\in L$.
\end{enumerate}
\end{algorithm}
\end{mdframed}

\noindent
\\
We claim that using this algorithm we can PAC sample from any "nice" enough function:
\begin{theorem}\label{thm:sample}
Let $L = L(B)$ be some lattice.
Let $f$ be a function whose DFT, denoted by ${\cal F}$, is QES, and is also $(\nu,t)$ bounded
for $t \leq \lambda_1(L^*) / 2^{n/2+2}$.
Then for any $\eps = 2^{-\poly(n)}$ one can PAC-sample $|f|^2_{L}$ efficiently quantumly with parameters $(\eps,4\nu)$.
\end{theorem}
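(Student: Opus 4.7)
The strategy is to track the quantum state through the six steps of the algorithm and bound the cumulative error against $|f|^2_L$. Conceptually there are three pieces: (i) the SysNF approximation of $L$ by $L'$ from Lemma \ref{lem:SysNF}, which absorbs the Euclidean slack of the PAC guarantee; (ii) a coherent ``folding'' of the QES state on $\Z^n$ into $L'_N$ via $\Phi_3$ (Claim \ref{cl:dual}) and Babai's nearest-plane algorithm (Proposition \ref{prop:np1}), which succeeds precisely because the support of ${\cal F}$ fits inside a Voronoi cell of $(NL')^*$; and (iii) the lattice DFT ${\cal F}_{L',N}$ (Theorem \ref{thm:qft}), which inverts the Fourier relationship ${\cal F} = \hat f$ to produce $f$-amplitudes on $L'_N$.

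First I would apply Lemma \ref{lem:SysNF} with parameter $\eps/(\sqrt n\,\det(B))$ to obtain $\langle B', \sigma, T\rangle$ with $L' = L(B')$ SysNF, $N = \det(L')$, and $\|\sigma(v)/T - v\| \leq (\eps/(\sqrt n\,\det(B)))\,\|v\|$ for every $v\in L$. The $T$-scaling also gives $\lambda_1(L'^*) = \Theta(\lambda_1(L^*)/T)$, so the hypothesis $t \leq \lambda_1(L^*)/2^{n/2+2}$ translates into $tN/T \leq \lambda_1(NL'^*)/2^{n/2+1}$, which is exactly the threshold needed for the nearest-plane step to return the unique closest lattice vector. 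I would then analyze steps 2--4: by $(\nu,t)$-boundedness, $\ket{\psi_1}$ is $\nu$-close to its restriction to $\|x\|\leq tN/T$; on this support the unique closest $(NL')^*$-vector to $x + \Phi_3(x)$ is $\Phi_3(x)$ itself at distance $\|x\|$, so step 4 cleanly uncomputes the ancilla. The resulting first-register state is $\nu$-close to $\sum_x {\cal F}(Tx/N)\,\ket{x + \Phi_3(x)}$, which via Claim \ref{cl:dual} is identified as the $(NL')^*$-Voronoi-reduction of ${\cal F}(T\cdot/N)$ on $L'_N$; a further $\nu$ boundedness tail lets us replace the Voronoi-reduction by the full $(NL')^*$-periodization on $L'_N$.

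Then I would run the lattice DFT argument in the direction dual to Fact \ref{fact:sinc1}. A direct character computation shows that ${\cal F}_{L',N}$ applied to $\sum_{z\in L'_N} {\cal F}(Tz/N)\,\ket{z}$ yields a state proportional to $\sum_{z\in L'_N} f(z)\,\ket{z}$, up to the $T$-rescaling and complex conjugation inherited from the scaling and up to an additional $\nu$ tail: once the support is confined to a single Voronoi cell of $(NL')^*$, the lattice DFT on $L'_N$ is precisely the inverse of the FT that defines ${\cal F} = \hat f$. Summing the four $\nu$-contributions --- boundedness of $\ket{\psi_1}$, Voronoi-vs-periodization identification, DFT-periodization tail, and DFT truncation error --- gives at most $4\nu$ in total variation. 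In step 6 the measurement returns $z \in L'_N$ with probability $(4\nu)$-close to $|f(\sigma^{-1}(z))|^2$, and the bound $\|\sigma^{-1}(z) - z/T\| \leq \eps\,\|\sigma^{-1}(z)\|$ from step 1 converts this into the $(\eps, 4\nu)$ PAC-sampling guarantee.

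The main obstacle is the DFT analysis: rigorously identifying the pre-DFT $L'_N$-state with the $(NL')^*$-periodization of ${\cal F}(T\cdot/N)$, and then tracking the $T$-scaling through ${\cal F}_{L',N}$ so that the output amplitudes are exactly $f(\sigma^{-1}(z))$ rather than some scaled or reflected variant. All the remaining error bookkeeping --- the boundedness tails, correctness of nearest-plane, and the $\sigma$-distortion in the final readout --- is routine once the SysNF scaling constants are pinned down.
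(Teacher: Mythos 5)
Your proposal is correct and follows the paper's proof essentially step for step: the same SysNF reduction with parameter $\eps/(\sqrt{n}\det(B))$, the same coherent folding via $\Phi_3$ and nearest-plane uncomputation under the $\lambda_1(NL'^*)/2^{n/2+1}$ threshold, the same use of the lattice DFT to turn the $(NL'^*)$-periodization of ${\cal F}(T\cdot/N)$ into $f$-amplitudes on $L'_N$, and the same $\sigma^{-1}$ readout with an accumulated error of $4\nu$. The only stylistic difference is that you flag the DFT-vs.-periodization identification and the $T$-rescaling (and a possible reflection/conjugation) as the main technical obstacle, whereas the paper simply invokes Proposition \ref{prop:ft1} at that point; your caution there is warranted but does not change the structure of the argument.
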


\subsection{Discussion}
It may be insightful already at this point to compare the above scheme to the
natural quantum scheme of generating super-positions
on a lattice that has already appeared in the work of Regev \cite{Reg09}:
in that scheme one generates a super-position on some discretization of real space,
that corresponds to the $L^*$-periodic FT
of the desired distribution, then "decodes" to the dual lattice $L^*$ using a CVP oracle for certain parameters,
and then applies QFT to achieve the desired super-position on the primal lattice $L$.

The main novelty of the proposed scheme is that it is encapsulated by the
SysNF reduction, in both ways.
This then requires the use of our newly defined quantum DFT for such lattices.
The reduction to SysNF allows to generate the initial coherent super-position on the trivial lattice $\Z^n$,
instead of some fine-grained version of the lattice itself as in \cite{Reg09}.
Notably, in order to generate a super-position on a fine-grained lattice $L/K$ for large $K$, one still requires access
to the input basis of $L$, which may make certain distributions inaccessible already at this initial step.

Let us compare the performance of our proposed algorithm with that of the classical algorithm of Gentry et al. \cite{GPV08}
for sampling from the discrete Gaussian distribution:
$$
\forall x\in L \quad \P(x) \propto e^{-\pi \|x\|^2 / s^2}.
$$
This classical algorithm samples from the Gaussian distribution with any variance $s^2$,  $s\geq \|\tilde{B}\| \sqrt{n}$
\footnote{The theorem statement in that paper is $\|\tilde{B}\| \omega(\sqrt{\ln n})$, but only for
a quasi-poly error in statistical distance, for exponentially small statistical distance
a bound of $\|\tilde{B}\| \sqrt{n}$ is required.}
where $\|\tilde{B}\|$ is the length of the longest vector when applying the Gram-Schmidt process to $B$.
Applying the LLL algorithm implies that $\|\tilde{B}\| \leq 2^{n/2} \lambda_1(L)$, and so 
that algorithm requires that $s$ be at least:
$$
s \geq 2^{n/2} \sqrt{n} \cdot \lambda_1(L) .
$$
To compare to our case, we use a result by Banaszczyk \cite{Ban93} showing that the discrete Gaussian $e^{-\pi \|x\|^2 / s^2}$ is 
$(s \sqrt{n}, 2^{-n})$ bounded for any lattice $L$.
In our algorithm we require that $t \leq \lambda_1(L^*) / 2^{n/2+2}$, hence
the lower bound on the standard-deviation of the output Gaussian $(1/t)$ is at most
$$
2^{n/2+2} / \lambda_1(L^*) \leq 2^{n/2+2} \lambda_1(L).
$$
Using $t = s \sqrt{n}$ implies that in our case the minimal lower-bound on the STD $s$ is given by
$$
s \geq 2^{n/2+2} \sqrt{n} \lambda_1(L).
$$
Thus, our lower-bound on $s$ is asymptotically the same as that of Gentry et al. \cite{GPV08}, for the discrete Gaussian distribution.

\subsection{Proof}
\begin{proof}
By Lemma \ref{lem:SysNF} the reduction generates a tuple $\langle B',\sigma,T\rangle$
such that for all $x\in L$ $\sigma(x)\in L(B')$ and
$$
\| \sigma(x)/T - x\| \leq \eps \|x\|,
$$
and $T = O(\det(B)/\eps) = 2^{\poly(n)}$.
Hence $L(B')$ can be represented efficiently by $\poly(n)$ bits.
Since $t \leq \lambda_1(L^*) / 2^{n/2+2}$ then
\be\label{eq:t1}
N t / T \leq \lambda_1(N{L}/T)^* / 2^{n/2+2} 
\leq 
\lambda_1(N{L'})^* / 2^{n/2+1}
\ee
where the last inequality follows from the prescribed SysNF approximation parameter $\eps$,
for all $\eps$ sufficiently small.
Since ${\cal F}$ is QES the state $\ket{\psi_1}$ can be generated efficiently.
By definition of $\Phi_3(x)$ we have:
$$
\ket{\psi_2} = \sum_{y\in N {L'}^*} \sum_{x\in L_N'} {\cal F}(T(x - y(x))/N) \ket{x} \ket{y(x)},
$$
where $y(x) := \Phi_3(x)\in NL^*$.
Consider the application of the nearest-plane algorithm.
Since 
${\cal F}(x)$ is $(\nu,t)$-bounded then
${\cal F}(x T/N)$ is $(\nu,N t/T)$ bounded, so we can approximate $\ket{\psi_2}$ to $l_2$ error at most $\nu$
by the following function:
$$
\ket{\psi_2} = \sum_{y\in N {L'}^*} \sum_{x\in L_N', \|x - y(x)\| \leq N t/T} {\cal F}(T(x - y(x))/N) \ket{x} \ket{y(x)},
$$
By Equation \ref{eq:t1} and Proposition \ref{prop:np1} for every $x$ in the support of $\ket{\psi_2}$ 
the call to the nearest-plane algorithm 
${\cal M}(x,N{B'}^{-T})$
returns a vector $y$ at distance at most 
$$
\|x - y(x)\| \cdot 2^{n/2} 
\leq
N t/ T  \cdot 2^{n/2}
\leq 
\lambda_1(N{L'})^*/2,
$$
away from the
closest $NL^*$ point from $x$ -- so it must be the correct point.
Hence, up to square-$l_2$-error at most $\nu$ we have:
$$
\ket{\psi_3} 
= 
\sum_{y\in N{L'}^*} \sum_{x\in L_N', \|x - y\| \leq N t/T} {\cal F}(T(x - y(x))/N) \ket{x} \otimes \ket{0}.
$$
We disregard the second register from now on, for clarity.
%By periodicity of $NL^*$ in $N$ we can rewrite the above as
%$$
%\ket{\psi_3} 
%= 
%\sum_{y\in N{L'}^*} \sum_{x\in L_N', \|x - y\| \leq N t/T} {\cal F}(T(x - y(x))/N) \ket{x} \otimes \ket{0}.
%$$
Using again the $(\nu,N t / T)$-bounded condition we approximate to $\nu$ error:
$$
\ket{\psi_3} 
= 
\sum_{y\in N{L'}^*} \sum_{x\in L_N'} {\cal F}(T(x - y)/N) \ket{x}.
$$
By Proposition \ref{prop:ft1} we get:
$$
\ket{\psi_4} = 
{\cal F}_{L',N} \ket{\psi_3}
=
\sum_{x\in L_N'} f(x/T) \ket{x}.
$$
Measuring this state yields $x\in L'$ according to the distribution 
$$
\P(x) = \frac{ |f(x/T)|^2}{\sum_{x\in L_N'} |f(x/T)|^2}.
$$
Applying the inverse SysNF map $\sigma^{-1}(x)$ yields by Lemma \ref{lem:SysNF} a vector $z\in L$
with probability:
$$
\P(z) \propto |f(z + \eps(z))|^2,
$$
where for each $z\in L_N'$ we have $\|\eps(z)\| \leq \eps \|z\| / (\sqrt{n} \det(B))$ 
by the prescribed error tolerance of the SysNF reduction.
Since $L$ is periodic in $\det(B) \cdot e_i$ for each $i\in [n]$ 
then we can assume w.l.o.g. that $\|z\| \leq \sqrt{n} \det(B)$, and so by our choice of parameters
$$
\|\eps(z)\| \leq \frac{\eps}{\sqrt{n} \det(B)} \sqrt{n} \det(B) = \eps.
$$
Hence $\P(z)$ is a PAC approximation of $f^2$ with parameters at most $(\eps,4\nu)$.

\end{proof}

\section*{Acknowledgements}
The authors thank Dorit Aharonov and Oded Regev for useful discussions and comments, and to Kevin Thompson for useful comments regarding this manuscript.
LE is supported by NSF grant no. CCF-1629809.
PS is supported by the NSF STC on Science of Information under Grant CCF0-939370, and by the NSF through grant CCF-121-8176.

\newcommand{\etalchar}[1]{$^{#1}$}

%\bibliographystyle{alphaurl}
%\bibliography{References}

\end{document}